\theoremstyle{plain}
\newtheorem{lemma}{Lemma}[section]
\newtheorem{theorem}{Theorem}[section]
\newtheorem{proposition}{Proposition}[section]
\newcommand{\bbx}{{\mathbf x}}
\newcommand{\al}{{\alpha}}
\newcommand{\te}{{\theta}}
\newcommand{\CC}{\mathbb{C}}
\newcommand{\la}{\lambda}\newcommand{\si}{\sigma}\newcommand{\Si}{\Sigma}
\newcommand{\us}{\underline{s}}
\newcommand{\de}{\delta}
\newcommand{\E}{\;  \mathbb{E}}
\begin{document}

\renewcommand{\thefootnote}{}
$\ $\par

\fontsize{10.95}{14pt plus.8pt minus .6pt}\selectfont
\vspace{0.8pc}
\centerline{\large\bf Estimation of the population spectral distribution from }
\vspace{2pt}
\centerline{\large\bf a large dimensional sample covariance matrix}
\vspace{.4cm}
\centerline{Weiming Li, Jiaqi Chen, Yingli Qin, Jianfeng Yao, Zhidong Bai}
\fontsize{9}{11.5pt plus.8pt minus .6pt}\selectfont

\footnote{
Weiming Li, School of Science, Beijing University of Posts and Telecommunications, Beijing, 100876, China. (E-mail: liwm601@gmail.com)}

\footnote{
Jiaqi Chen,
Department of Mathematics,
Harbin Institute of Technology,
Harbin, 150001,
China.
(E-mail: chenjq1016@gmail.com)}
\footnote{
Yingli Qin,
Statistics Division of Mathematical Sciences,
School of Physical and Mathematical Sciences,
Nanyang Technological University,
Singapore.
(E-mail: yingli.qin@gmail.com)}

\footnote{
Jianfeng Yao, Department of Statistics and Actuarial Sciences,
The University of Hong Kong, Hongkong, China. (E-mail: jeffyao@hku.hk)
}

\footnote{
Zhidong Bai,
KLAS MOE and School of Mathematics and Statistics,
Northeast Normal University,
Changchun, 130024.
China.
(E-mail: baizd@nenu.edu.cn)}

\begin{quotation}
\noindent {\it Abstract:}
This paper introduces a new method to estimate the spectral distribution of a population covariance matrix from high-dimensional data. The method is founded on a meaningful generalization of the seminal Mar\v{c}enko-Pastur equation, originally defined in the complex plan, to the real line. Beyond its easy implementation and the established  asymptotic consistency, the new estimator outperforms  two  existing estimators from the literature in almost all the  situations tested in a simulation experiment.
An application to the analysis of the correlation matrix of  S\&P stocks data is also given.
\par

\vspace{9pt}
\noindent {\it Key words and phrases:}
Empirical spectral distribution,
high-dimensional data,
Mar\v{c}enko-Pastur distribution,
large sample covariance matrices,
Stieltjes transform
\par
\end{quotation}\par

\section {Introduction}

 Let $\bbx_{1},\dots,\bbx_{n}$ be a sequence of i.i.d. zero-mean random vectors in
 $\mathbb{R}^{p}$ or $\CC^p$,
 with a common  population covariance  matrix
 $\Sigma_{p}$.
 When the population size $p$ is not negligible
 with respect to the sample size $n$,
 modern random matrix theory indicates that
 the sample covariance matrix
 \[    S_n = \frac1n \sum_{j=1}^n  \bbx_j  \bbx_j^*
 \]
 does not approach $\Sigma_p$.
For instance, in  a  simple case where $\Sigma_p=I_p$ (identity
matrix) , the eigenvalues of
$S_n$  will spread over an interval approximately equal to
$(1\mp\sqrt{p/n})^2$ around the unique
population eigenvalue 1 of $\Sigma_p$ (\cite{MP67}, \cite{YinB88} and \cite{BaiYin1993}). Therefore, classical statistical
 procedures based on an approximation of $\Sigma_p$ by $S_n$ become
 inconsistent  in  such  high dimensional data situations.

 To be precise, let us recall that the  {\em spectral   distribution} (SD) $G^A$
 of an  $m\times  m$ Hermitian matrix (or real symmetric)  $A$
 is  the measure generated by its eigenvalues
 $\{\la^A_i \}$,
 \[    G^{A} = \frac1m \sum_{i=1}^m \de_{\la^A_i}~,
 \]
 where $\de_b$ denotes the Dirac point measure at $b$.
 Let $(\si_{i})_{ 1\le i\le p}$ be the $p$ eigenvalues of the
 population covariance matrix $\Sigma_p$. We are particularly
 interested  in the following  SD
 \[  H_p := G^{\Si_p}= \frac{1}{p} \sum_{i=1}^p \delta_{\si_i}.
 \]
 Following  the random matrix theory, both sizes $p$ and
 $n$ will grow to infinity.  It is  then natural to assume that $H_p$ weakly
 converges to a limiting distribution $H$ when $p\to\infty$.
 We refer this  limiting SD  $H$ as the
 {\em population spectral distribution}  (PSD) of the observation
 model.

 The main observation is that under reasonable assumptions,
 when both dimensions $p$ and $n$ become large at a proportional rate
 say $c$,  almost surely,
 the (random) SD  $G^{S_n}$ of the sample covariance
 matrix $S_n$ will weakly converge  to a deterministic distribution
 $F$, called {\em limiting spectral distribution} (LSD).
Naturally this LSD $F$ depends on the PSD $H$, but in general
this relationship is  complex and has no explicit form.
The only exception is the case where all the population eigenvalues
 $(\si_i)$ are unit, i.e. $\Sigma_p\equiv I_p$ ($H=\de_1$);
the LSD $F$ is then
explicit known to be the Mar\v{c}enko-Pastur distribution with an
explicit density function. For a general PSD $H$, this relationship is
expressed via an  implicit equation, see Section \ref{sec:estim},
 Eqs. \eqref{equ2.1} and \eqref{equ2.3}.

 An  important question here is the  recovering of
 the PSD $H$ (or $H_p$) from the sample covariance matrix $S_n$.
 This question  has a central
 importance in several popular statistical methodologies like
 Principal Component Analysis (\cite{Johnstone01}),
 Kalman filtering or Independent
 Component Analysis which all rely on an efficient estimation of some
 population covariance matrices.

 Recently,  \cite{KarE08} has proposed a variational and
 nonparametric approach to this  problem
 based on  an appropriate distance function using the
 Mar\v{c}enko-Pastur equation
 \eqref{equ2.1} below  and a large dictionary  made with base density
 functions  and Dirac point masses.
 The proposed estimator is proved consistent in a nonparametric
 estimation sense  assuming both the  dictionary size and the number of observations $n$
 tend to infinity.  However,  no  result on the
 convergence rate of the estimator,  e.g. a
 central limit theorem, is given.

 In another important work \cite{RaoJ08}, the authors propose to use
 a suitable set of empirical moments, say the first $q$ moments: for $k=1,\ldots,q,$
$
   \hat{\alpha}_{k}=p^{-1}\mathop{\rm tr} S^k_n = p^{-1}
   \sum_{l=1}^{p}\lambda_{l}^{k}
$
 where   $(\la_l)$ are the eigenvalues of $S_n$ (assuming $p\le n$).
 Here a pure parametric approach is adopted and the PSD depends on a set of real parameters $\te$: $H=H(\te)$.
 Therefore,
 when $n\rightarrow\infty$ and under appropriate normalization,
 the sample moments  $(\hat{\alpha}_{k})$ will have a Gaussian
 limiting distribution
 with asymptotic mean and variance
 $\{m_{\theta},~Q_{\theta}\}$ which are functions of  the (unknown) parameters
 $\te$.
 In \cite{RaoJ08}, the authors propose an estimator
 $\hat{\theta}_R$  of the parameters by
 maximizing the asymptotic Gaussian likelihood of $\hat\al=(\hat\al_j)_{1\le j\le q}$,
with distribution $N_q(m_{\theta},Q_{\theta})$.
 Intensive simulations illustrate the consistency and the asymptotic normality of
 this estimator. However, their simulation experiments are limited to  simplest
 situations
 and
 no theoretic result are provided concerning the consistency of the
 estimator.
 An important   difficulty in this
 approach is that the functions $m_\theta$ and $Q_\theta$ have no explicit form.

 In a recent work \cite{Bai2010},
 a modification of the procedure
 in \cite{RaoJ08} is proposed to get  a direct moments estimator
 based on  the  sample moments $ (\hat{\alpha}_j)$.
 Compared to \cite{KarE08} and \cite{RaoJ08},
 this  moment estimator is  simpler and much easier to implement.
 Moreover, the convergence rate of this
 estimator (asymptotic normality) is also established.
 A recent paper by the authors in \cite{CBY11} has also
 analyzed the  underlying  order selection problem and
 proposed a  solution based on the   cross-validation principle.

 However, despite all the above  contributions,
 there is still
 a need for new methods of estimation.
 Actually,
 the  general approach in \cite{KarE08}
 has  several implementation issues that
 seem to be responsible
for its  relatively low performance
as attested by the very simple nature of
provided simulation results.
This low efficiency is probably due
to the use of a too general dictionary
made with  large number of
discrete distributions and piece-wisely linear densities.
Concerning the moment based methods in \cite{RaoJ08} and \cite{Bai2010},
we will see that their accuracy  degrades drastically as the number of parameters to be estimated increases. Lastly, it is well known that the contour-integral based method in a related work \cite{M08} is limited to a small class of discrete models where distinct population eigenvalues should generate non-overlapping clusters of sample eigenvalues.

The new approach developed in this paper can be viewed as a synthesis
of the optimization approach in  \cite{KarE08} and the parametric
setup in  \cite{Bai2010}. On one hand, we adopt  the
optimization approach  and will prove that it is in general preferable to the moment approaches.  On the other hand, using a generic parametric approach for
discrete PSDs as well as continuous PSDs, we are able to
avoid the aforementioned implementation difficulties in
\cite{KarE08}.
Another  important  contribution from the paper is that the
optimization problem has been moved from
the complex plan to the real line by
considering a characteristic equation (Mar\v{c}enko-Pastur equation)
on the real line.
The obtained optimization procedure is then much simpler than
the original one in \cite{KarE08}.

The rest of the paper is organised as follows.
In the next section, we provide a Mar\v{c}enko-Pastur equation
defined on the real line which will be the corner-stone of our
estimation method. This method is developed
in Section 3 and we prove  its strong consistency.
Then, in Section 4, simulation experiments are carried out
to compare the performance of three estimation methods under investigation.
The last section collects proofs of main
theorems.

\section{Mar\v{c}enko-Pastur equation on the real line}
\label{sec:MP}

Throughout the paper,
$A^{1/2}$ stands for any Hermitian square root of a non-negative definite
Hermitian matrix $A$. Our model assumptions are as follows.

\medskip
\noindent{\em Assumption}   (a). \quad
  The sample and population sizes $n,p$ both tend to infinity, and
  in such a way that $p/n\to c \in(0,\infty)$.

\medskip
\noindent{\em Assumption}   (b). \quad
  There is a doubly
  infinite array of i.i.d. complex-valued random variables
  $(w_{ij})$, $i,j\ge 1$
  satisfying
  \[  \label{eq:moments}
  \E(w_{11})=0,~~
  \E(|w_{11}|^2)=1,
  \]
  such that   for each $p,n$,  letting $W_n=(w_{ij})_{1\le i\le p,1\le j \le n}$,
  the observation vectors
  can be represented as
  $\bbx_j=\Si_p^{1/2}{w_{.j}}$ where $w_{.j}=(w_{ij})_{1\le i\le p}$
  denotes the $j$-th column of $W_n$.

\medskip
\noindent{\em Assumption}   (c). \quad
  The SD  $H_p$ of    $\Sigma_{p}$ weakly converges
  to a probability distribution  $H$ as
  $n\to\infty$.

\medskip

The assumptions (a)-(c) are classical conditions
for  the celebrated Mar\v{c}enko-Pastur theorem (\cite{MP67,Silverstein95}, see
also \cite{BSbook}). More precisely,
under these  Assumptions, almost
surely, as $n\to \infty$, the empirical SD $ F_n:=G^{{S}_{n}}$  of  ${S}_{n}$,
weakly converges to a (nonrandom)
generalized Mar\v{c}enko-Pastur  distribution
$F$.

Unfortunately,  except the simplest case
where $H\equiv \de_1$, the LSD
$F$ has no explicit form and it is characterized as  follows.
Let $\us(z)$ denote  the Stieltjes transform of
$
  cF + (1-c)\de_0~,
$
which is a one-to-one map defined on the  upper half
complex plan $\CC^+=\{ z\in \CC:~ \Im(z) >0\}$.
This transform  satisfies the following
fundamental  Mar\v{c}enko-Pastur  equation (MP):
\begin{equation}  \label{equ2.1}
  z  =  - \frac1 {\us(z)}  +  c \int\!\frac{t}{1+t\us(z)} dH(t)~,\quad z\in\CC^+.
\end{equation}

The above MP  equation excludes the real line from its domain of
definition.
As the  first contribution of  the paper, we fill this gap by an
extension of the MP equation to the real line.
The estimation method introduced in Section 3 will be entirely
based on this extension.

The support of a
distribution $G$ is denoted by  $S_G$
and its complementary set by $S_G^c$, since the ESD $F_n$ is observed, we will
use $\us_n$, the Stieltjes transform of
$(p/n)F_n + (1-p/n)\de_0$ to approximate $\us$ in the MP equation. More precisely, let for $u\in\mathbb R$,
\begin{equation}\label{equ2.2}
\us_n(u)=-\frac{1-p/n}{u}+\frac{1}{n}\sum_{l=1}^p\frac{1}{\lambda_l-u}.
\end{equation}
It is clear
that the domain of
$\underline{s}_n(u)$ is $S_{F_n}^c$. Thus, $\underline{s}_n(u)$'s are
well defined on $\mathring{U}$ for all large $n$, where
$\mathring{U}$ is the interior of
$U=\liminf_{n\rightarrow\infty}S_{F_n}^c\setminus\{0\}$.

\begin{theorem}\label{th1}
  Assume that the assumptions (a)-(b)-(c) hold.
  Then
\begin{itemize}
\item [{\rm (1)}] for any $u\in \mathring{U}$, $\underline{s}_n(u)$ converges to $\underline{s}(u)$,
\item [{\rm (2)}] for any $u\in S_F^c$, $\underline{s}=\underline{s}(u)$ is a solution to equation
\begin{eqnarray}\label{equ2.3}
u=-\frac{1}{\underline{s}}+c\int\frac{t}{1+t\underline{s}}dH(t),
\end{eqnarray}
\item [{\rm (3)}] the solution is also unique in the set $B^+=\{\underline{s}\in\mathbb R\backslash\{0\}: du/d\underline{s}>0, (-\underline{s})^{-1}\in S_H^c\}$,
\item [{\rm (4)}]  for any non-empty open interval $(a, b)\subset B^+$,
$H$ is uniquely determined by $u(\us),\ \us\in(a, b)$.
\end{itemize}
\end{theorem}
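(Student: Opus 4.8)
The plan is to prove the four assertions in sequence, leveraging the classical Marčenko-Pastur theory on $\CC^+$ and a boundary/analytic-continuation argument to pass to the real line.

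\medskip
\noindent\textbf{Part (1).}
First I would recall that under (a)-(c) the ESD $F_n$ converges weakly (a.s.) to $F$, hence $(p/n)F_n+(1-p/n)\de_0$ converges weakly to $cF+(1-c)\de_0$. For a fixed $u\in\mathring U$, the function $t\mapsto 1/(t-u)$ is bounded and continuous on a neighbourhood of $S_F\cup\{0\}$ (since $u$ lies in the limiting complement $U$, it is bounded away from $S_F$ and from $0$), so $\us_n(u)=\int (t-u)^{-1}\,d[(p/n)F_n+(1-p/n)\de_0](t)\to \int (t-u)^{-1}\,d[cF+(1-c)\de_0](t)$. The key point is that $\us$, a priori defined on $\CC^+$ by \eqref{equ2.1}, extends continuously to $S_F^c\cap\mathbb R$ by the Silverstein--Choi analysis of the support of $F$, and this continuous extension equals exactly that real integral. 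So $\us_n(u)\to\us(u)$.

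\medskip
\noindent\textbf{Parts (2) and (3).}
For (2), I would take $u\in S_F^c$. By the Silverstein--Choi characterization, the function $\us\mapsto z(\us):=-1/\us+c\int t/(1+t\us)\,dH(t)$, originally an identity on $\CC^+$, extends to real arguments $\us$ with $-1/\us\in S_H^c$, and $u\in S_F^c$ precisely when $\us(u)$ is a real number in this set with $z'(\us(u))>0$; substituting $\us=\us(u)$ into \eqref{equ2.3} is then just this extended identity. For (3), suppose $\us_1,\us_2\in B^+$ both solve \eqref{equ2.3} for the same $u$. I would exploit the fact (again from Silverstein--Choi, or directly by differentiating the integral representation of $z(\us)$) that on each connected component of $\{\us\in\mathbb R\setminus\{0\}: -1/\us\in S_H^c\}$ the function $z(\us)$ is real-analytic and that the set where $z'(\us)>0$ consists of intervals on each of which $z$ is strictly increasing, with the images of distinct such intervals being the distinct components of $S_F^c\cap\mathbb R$ — these images are disjoint. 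Hence if $z(\us_1)=z(\us_2)=u$ with both $\us_i\in B^+$, the two points lie on the same increasing branch, and strict monotonicity forces $\us_1=\us_2$.

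\medskip
\noindent\textbf{Part (4).}
Given (3), the map $\us\mapsto u(\us)$ restricted to $(a,b)\subset B^+$ is a real-analytic strictly increasing function, so it determines, for $\us\in(a,b)$, the quantity $\phi(\us):=c\int t/(1+t\us)\,dH(t)=u(\us)+1/\us$. The plan is to recover $H$ from $\phi$. Writing $\psi(\us):=\int t/(1+t\us)\,dH(t)$, note $\psi$ is real-analytic on $(a,b)$, and since $(a,b)\subset B^+$ the values $-1/\us$ avoid $S_H$, so $\psi$ admits analytic continuation to a complex neighbourhood; by the identity theorem, $\psi$ — hence the function $\zeta\mapsto\int t/(1+t\zeta)\,dH(t)$ — is determined on its whole domain of analyticity in $\CC\setminus\{-1/t: t\in S_H\}$ by its values on $(a,b)$. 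A change of variable $\zeta=-1/x$ turns $\int t/(1+t\zeta)\,dH(t)$ into (a constant plus) $x$ times the Stieltjes transform $\int (t-x)^{-1}\,dH(t)$ of $H$, which is then pinned down on $\CC^+$; by the Stieltjes inversion formula $H$ is uniquely determined.

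\medskip
The main obstacle I anticipate is Part (3): making the uniqueness argument rigorous requires a careful bookkeeping of the branch structure of the real function $z(\us)$ — i.e. invoking precisely the Silverstein--Choi description of which intervals of $\us$ map increasingly onto $S_F^c\cap\mathbb R$ and checking that $B^+$ captures exactly one point per such interval per value of $u$. Parts (1), (2), (4) are comparatively routine once the continuous/analytic extension of \eqref{equ2.1} to the real line (the ``Marčenko-Pastur equation on the real line'' promised in the section title) is in hand.
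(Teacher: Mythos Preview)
Your proposal for parts (1)--(3) is essentially the paper's own argument: in (1) you use the definition of $\mathring U$ to bound $1/(t-u)$ uniformly on the supports of $F_n$ for large $n$ and then pass to the limit (the paper phrases this via dominated convergence and Vitali's theorem), and in (2)--(3) you invoke exactly the Silverstein--Choi description of the real inverse $u(\us)$, which the paper packages as Lemmas~\ref{lema2} and~\ref{lema3}. You correctly flag that the only delicate point is bookkeeping the increasing branches in (3); the paper handles this by observing that Lemma~\ref{lema2} makes $u\big|_{B^+}$ surjective onto $S_F^c\setminus\{0\}$ and Lemma~\ref{lema3} makes it injective.

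Part (4), however, is genuinely different. You propose to recover $H$ by analytic continuation: extend $\psi(\us)=\int t/(1+t\us)\,dH(t)$ from $(a,b)$ to $\CC^+$ via the identity theorem, rewrite it (through $\zeta=-1/x$) in terms of the Stieltjes transform of $H$, and apply Stieltjes inversion. The paper instead stays on the real line and uses a Laplace-transform trick: writing $t/(1+t\us)=\int g(t,v)e^{-(1/t+\us)v}\,dv$ turns $\psi(\us)$ into a two-sided Laplace transform in the variable $\us-\us_0$, and uniqueness of the Laplace transform forces $\int g(t,v)e^{-v/t}\,dH_1(t)=\int g(t,v)e^{-v/t}\,dH_2(t)$, hence $H_1=H_2$. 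Your route is arguably the more natural one from a complex-analysis standpoint and ties the result directly back to Stieltjes inversion; the paper's route avoids complex analysis entirely and gives a purely real computation, at the cost of an integral representation that is a bit ad hoc. Both are valid; one small point to tighten in your version is that when you compare two candidate measures $H_1,H_2$, you should note that both $\psi_i$ are automatically analytic on $\CC^+$ and that their boundary values on $(a,b)$ coincide with the real integrals (by dominated convergence, since finiteness of the real integral on $(a,b)$ is implicit in the hypothesis), so the identity theorem applies on $\CC^+$ without needing to know in advance that $(a,b)\subset B^+$ for the second measure.
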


The proof is given in the last section. Some remarks are in order.
\begin{enumerate}
\item
  Notice that since $(-\infty,0)\subset\mathring{U}\subset S_F^c$, there are infinitely many $u$-points such that
  $\underline{s}_n(u)$ almost
  surely converges to $\underline{s}(u)$.
\item
  The MP equation \eqref{equ2.3} can be inverted in the following
  sense:
  the knowledge of $u(\underline s)$ on any interval in
  $B^+$ (see Figure \ref{fig1}) will uniquely determine the PSD $H$.
  The estimation method in Section~3 will be built
  on this property.
\end{enumerate}

\begin{figure}[th]
\centering
\includegraphics[width=3in,height=2in]{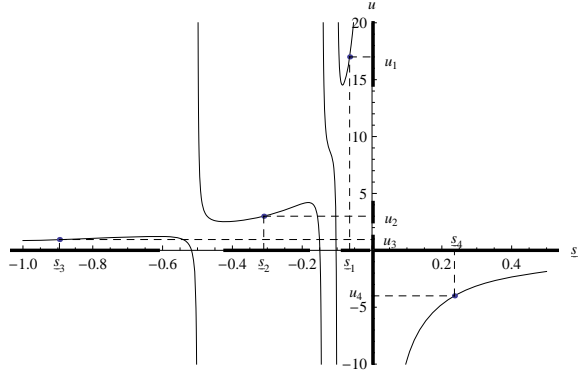}
\caption{The curve of $u=u(\underline{s})$ (solid thin), and the sets $B^+$ and $S_{F}^c$ (solid thick) for $H=0.3\delta_2+0.4\delta_7+0.3\delta_{10}$ and $c=0.1$. $u_i=u(\underline{s}_i), \underline{s}_i\in B^+, i=1,2,3,4$.
}
\label{fig1}
\end{figure}

\section {Estimation}
\label{sec:estim}
\subsection{The method}

We consider the estimation problem in a parametric setup.
Suppose $H=H(\theta)$ is the limit of $H_p$ with unknown parameter vector $\theta\in\Theta\subset{\mathbb R^q}$. The procedure of the estimation of $H$ includes three steps:
\begin{itemize}
\item [{\rm S1.}] Choose a $u$-net $\{u_1,\ldots, u_m\}$ from $\mathring{U}$, where $u_j$'s are distinct and the size $m$ is no less than $q$.
\item [{\rm S2.}]
For each $u_j$, calculate $\underline{s}_n(u_j)$ using \eqref{equ2.2} and plug the pair into the MP equation \eqref{equ2.3}. Then, we obtain $m$ approximate equations
    \begin{eqnarray*}u_j&\simeq&-\frac{1}{\underline{s}_n(u_j)}+\frac{p}{n}\int\frac{tdH(t,\theta)}{1+t\underline{s}_n(u_j)}\\
    &:=&\widehat{u}_j(\underline{s}_{nj},\theta) \quad (j=1,\ldots,m).
    \end{eqnarray*}
\item [{\rm S3.}]
Find the least squares solution of $\theta$,
    \begin{eqnarray*}
    \widehat{\theta}_n&=&\arg\min_{\theta\in\Theta}\sum_{j=1}^{m}\bigg(u_j-\widehat{u}_j(\underline{s}_{nj},\theta)\bigg)^2.
    \end{eqnarray*}
\end{itemize}

We name $\widehat{\theta}_n$ as the least squares estimate (LSE) of $\theta$. Accordingly, $\widehat{H}=H(\widehat{\theta}_n)$ is called the LSE of $H$. A central issue here is the choice of the $u$-net $\{u_1,\ldots,u_m\}$. In Section 4, we will provide a robust method for this choice that can be used in practice with real data.

This procedure can also be applied to the MP equation \eqref{equ2.1} in complex field as in \cite{KarE08}. Similarly  to our first two steps, the author chose a $z$-net from $\mathbb C^+$ and created a system of approximate equations by a discretisation $H$ as a weighted sum of a grid of pre-chosen mass points. The estimates of the weight parameters were then obtained by minimizing the approximation errors in terms of the $L_\infty$ norm. The author also suggested to use a $z$-net with $\Re{(z)}<0$ and $\Im{(z)}$ near $0$. This is almost equivalent to choosing a $u$-net with $u<0$ in our procedure. But we strongly suggest to use more $u$-points from $\mathring{U}\cap\mathbb R^+$ if possible, since these points are likely to carry some different information about $H$ comparing with negative $u$-points. For the optimization step, whatever the distance used ($L_2$-norm, $L_\infty$-norm, etc.) our method would be easier and faster than El Karoui's one since the optimization is carried on the real domain.

\subsection{Consistency}

We establish the strong consistency of our estimator in two models that are widely used in the literature. The estimates will be further studied in the simulation section.

The first model is made with discrete PSDs with finite support on $\mathbb R^+$, i.e.
$$H(\theta)=m_1\delta_{a_1}+\cdots+ m_k\delta_{a_k},\quad \theta\in\Theta,$$ where $m_k=1-\sum_{i=1}^{k-1}m_i$, $\theta=(a_1,\ldots, a_k, m_1,\ldots, m_{k-1})$ are ($2k-1$) unknown parameters and
\begin{eqnarray*}
\Theta =\bigg\{\theta\in\mathbb R^{2k-1}: m_i>0, \sum_{i=1}^{k}m_i=1;  0< a_1<\cdots <a_k<+\infty \bigg\}.
\end{eqnarray*}
Here, Equation \eqref{equ2.3} can be simplified to
\begin{eqnarray*}
u=-\frac{1}{\underline{s}}+c\sum_{i=1}^{k}\frac{a_im_i}{1+a_i\underline{s}}.
\end{eqnarray*}
For the well-definition of the equation on $\Theta$, we assume that the $u$-net satisfies
\begin{eqnarray}\label{equ3.1}
\inf_{\theta\in\Theta}\min_{i, j}|1+a_i\underline{s}(u_j)|\geq\delta,
\end{eqnarray}
where $\delta$ is some positive constant. It is clearly satisfied if all the $u_j$'s are negative.

\begin{theorem}\label{th2}
In addition to the assumptions (a)-(b)-(c), suppose that the true value of the parameter $\theta_0$ is an inner point of $\Theta$ and the condition \eqref{equ3.1} is fulfilled. Then, the LSE $\widehat{\theta}_n$ for the discrete model is strongly consistent, that is, almost surely,
$\widehat{\theta}_n\rightarrow \theta_0.$
\end{theorem}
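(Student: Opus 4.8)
The plan is to treat $\widehat\theta_n$ as a classical M‑estimator. Write the random criterion of Step~S3 as $K_n(\theta)=\sum_{j=1}^m\big(u_j-\widehat u_j(\us_{nj},\theta)\big)^2$ and, guided by Theorem~\ref{th1}(1), introduce its deterministic analogue $K(\theta)=\sum_{j=1}^m\big(u_j-u_j(\theta)\big)^2$, where $u_j(\theta)=-1/\us(u_j)+c\,g_\theta(\us(u_j))$ and $g_\theta(s)=\sum_{i=1}^k a_im_i/(1+a_is)=\sum_{i=1}^k m_i/(b_i+s)$ with $b_i:=1/a_i$. Since each $u_j\in\mathring U\subset S_F^c$, Theorem~\ref{th1}(2) (with $H=H(\theta_0)$) gives $u_j=u_j(\theta_0)$, so $K(\theta_0)=0$ and $\theta_0$ is a global minimiser of $K$. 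Note also that $\us(u_j)\neq0$ (indeed $\us(u_j)\in B^+$) and that the $\us(u_1),\dots,\us(u_m)$ are pairwise distinct, since $\us(u_i)=\us(u_j)$ would give $u_i=-1/\us(u_i)+c\,g_{\theta_0}(\us(u_i))=u_j$.

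The core of the proof is identifiability: $K(\theta)=0$ must force $\theta=\theta_0$, with enough uniformity to control the non‑compactness of $\Theta$. If $K(\theta)=0$ then $g_\theta$ and $g_{\theta_0}$ agree at the $m\ge q=2k-1$ distinct points $\us(u_1),\dots,\us(u_m)$, at which, by condition~\eqref{equ3.1} (applied at $\theta_0$ and at $\theta$), all factors $b_i+\us(u_j)$ and $b_i^0+\us(u_j)$ are nonzero. The key algebraic fact is that $g_\theta-g_{\theta_0}$, put over the common denominator $\prod_i(b_i+s)\prod_i(b_i^0+s)$ of degree $2k$, has a numerator of degree at most $2k-2$: both $g_\theta$ and $g_{\theta_0}$ are $s^{-1}+O(s^{-2})$ at infinity because $\sum_i m_i=\sum_i m_i^0=1$, so their difference is $O(s^{-2})$, which kills the top coefficient of the numerator. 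A polynomial of degree $\le 2k-2$ with $\ge 2k-1$ roots vanishes, hence $g_\theta\equiv g_{\theta_0}$ as rational functions; uniqueness of the partial‑fraction expansion (poles $-b_i$ with positive residues $m_i$) gives equality of the multisets $\{(b_i,m_i)\}$, and the ordering $a_1<\cdots<a_k$ upgrades this to $\theta=\theta_0$. The same counting handles escaping or boundary sequences: along such a sequence the maps $\theta\mapsto g_\theta(\us(u_j))$ converge, extending $K$ continuously to a compact set $\bar\Theta\supset\Theta$, and the limiting measure $\nu$ on $[0,\infty]$ differs from $H(\theta_0)$ because it charges $\{0\}$ or $\{\infty\}$ or has at most $k-1$ atoms in $(0,\infty)$; in each case $g_\nu-g_{\theta_0}$ is a nonzero rational function whose numerator still has degree $\le 2k-2$ (either because $g_\nu$ has fewer than $k$ poles, or because $g_\nu-g_{\theta_0}=O(s^{-2})$), so it cannot vanish at $2k-1$ points. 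Hence the continuous extension of $K$ to $\bar\Theta$ vanishes only at $\theta_0$ and is bounded below by a positive constant outside any neighbourhood of $\theta_0$.

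Next I would show $K_n\to K$ uniformly on $\bar\Theta$ almost surely. For each of the finitely many indices $j$, Theorem~\ref{th1}(1) gives $\us_n(u_j)\to\us(u_j)$ a.s., and Assumption~(a) gives $p/n\to c$. On the intersection of these almost‑sure events, condition~\eqref{equ3.1} yields for all large $n$ a lower bound $|1+a_i\us_n(u_j)|\ge\delta'>0$ uniform in $\theta\in\Theta$ (for bounded $a_i$ perturb $|1+a_i\us(u_j)|\ge\delta$; for large $a_i$ use $|1+a_i\us_n(u_j)|\ge a_i(|\us_n(u_j)|-1/a_i)$ together with $\us_n(u_j)\to\us(u_j)\neq0$). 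Since $a_im_i/(1+a_is)=m_i/(b_i+s)$ stays bounded (it tends to $m_i/s$ as $a_i\to\infty$) and is Lipschitz in $s$ once the denominator is bounded away from $0$, one can pass the limit inside $\min_\theta$, which gives the uniform convergence. The conclusion is then the standard extremum argument: $\min_{\bar\Theta}K_n\le K_n(\theta_0)\to K(\theta_0)=0$, so from any subsequence one extracts a further subsequence with $\widehat\theta_n\to\theta_\ast\in\bar\Theta$, and uniform convergence gives $K(\theta_\ast)=\lim K_n(\widehat\theta_n)=0$, whence $\theta_\ast=\theta_0$; therefore $\widehat\theta_n\to\theta_0$ a.s., and in particular $\widehat\theta_n\in\Theta$ eventually.

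The main obstacle I anticipate is the identifiability step in its uniform, compactified form: the algebra showing that a discrete PSD with $k$ atoms is pinned down by only $2k-1$ values of the MP curve, and the verification that no degenerate limit (atoms escaping to $0$ or $\infty$, atoms merging, or masses vanishing) can reproduce $H(\theta_0)$ at those points. Once condition~\eqref{equ3.1} is exploited to keep the ratios $a_im_i/(1+a_is)$ bounded and equicontinuous, the uniform convergence and the final limit‑point extraction are routine.
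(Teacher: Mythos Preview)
Your proposal is correct and follows essentially the same route as the paper: define the limiting criterion $K$ (the paper's $\varphi$), establish identifiability of $\theta_0$ via the partial-fraction/polynomial degree argument (your observation that the numerator of $g_\theta-g_{\theta_0}$ has degree $\le 2k-2$ is exactly the monicity that makes the paper's ``degree $2k-1$ polynomials agreeing at $2k-1$ points'' step go through), handle the non-compactness of $\Theta$ by analysing limits where atoms escape or merge, and conclude by uniform convergence plus the standard extremum argument. The only organisational differences are that you compactify $\Theta$ and extend $K$ continuously, whereas the paper argues by contradiction that $\widehat\theta_n$ eventually lies in a compact subset of $\Theta$, and that you obtain uniform convergence via an equicontinuity/Lipschitz bound rather than the paper's appeal to Vitali's theorem.
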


Next we suppose that the PSD $H(\theta)$ has a probability density $h(t|\theta)$ with respect to Lebesgue measure. From \cite{Sbook} (Chapters 2, 4), if $h(t|\theta)$ has finite moments of all order, it can be expanded in terms of Laguerre polynomials:
$$h(t|\theta)=\sum_{j\geq0} c_j\psi_j(t)e^{-t},$$
where $$c_j=\int\psi_j(t)h(t|\theta)dt.$$
As discussed in \cite{Bai2010}, we consider a family of $h(t|\theta)$ with finite expansion $$h(t|\theta)=\sum_{j=0}^q c_j\psi_j(t)e^{-t}=\sum_{j=0}^{q} \alpha_jt^je^{-t},\quad t>0,\quad \theta\in\Theta,$$
where $\alpha_0=1-\alpha_1-\cdots-q!\alpha_q$, $\theta=(\alpha_1,\ldots,\alpha_q)$, and
$$\Theta=\big\{\theta\in \mathbb R^q: h(t|\theta)>0,\ t\in\mathbb R^+ \big\}.$$
For this model, Equation \eqref{equ2.3} becomes
\begin{eqnarray*}
u=-\frac{1}{\underline{s}}+c\sum_{j=0}^{q}\alpha_j\int\frac{t^{j+1}e^{-t}}{1+t\underline{s}}dt.
\end{eqnarray*}
It's clear that the calculation of $\widehat{\theta}_n$ is here simple since the above equation is linear with respect to $\theta$.

\begin{theorem}\label{th3}
In addition to the assumptions (a)-(b)-(c), suppose that the true value of the parameter $\theta_0$ is an inner point of $\Theta$. Then, the LSE $\widehat{\theta}_n$ for the continuous model is strongly consistent.
\end{theorem}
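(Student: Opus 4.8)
The plan is to exploit the fact that, for the continuous model, the fitted curve $\widehat u_j(\us_{nj},\te)$ is \emph{affine} in $\te$, so that Step~S3 becomes an ordinary linear least-squares problem. Substituting $\al_0=1-\sum_{l=1}^{q}l!\,\al_l$ into the expression for $\widehat u_j$ in S2, one gets $\widehat u_j(\us_{nj},\te)=r_{nj}+\sum_{l=1}^{q}\al_l\,d_{njl}$, where $r_{nj}=-\us_n(u_j)^{-1}+(p/n)\int te^{-t}(1+t\us_n(u_j))^{-1}dt$ and $d_{njl}=(p/n)\bigl(\int t^{l+1}e^{-t}(1+t\us_n(u_j))^{-1}dt-l!\int te^{-t}(1+t\us_n(u_j))^{-1}dt\bigr)$, none of which depends on $\te$. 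Hence the objective in S3 is $\|v_n-D_n\te\|^2$ with $(v_n)_j=u_j-r_{nj}$ and $D_n=(d_{njl})\in\mathbb R^{m\times q}$. I would first check these are well defined for large $n$: since $h(\cdot|\te_0)$ is positive on $(0,\infty)$, for \eqref{equ2.3} to be meaningful at a point $u\in\mathring U$ one must have $\us(u)>0$ (a negative value would place a pole of the integrand inside the support of $H$). So by Theorem~\ref{th1}(1), $\us_n(u_j)\to\us(u_j)>0$ a.s., the integrands are eventually dominated by $t^{l+1}e^{-t}$, and dominated convergence together with $p/n\to c$ yields $D_n\to D$ and $v_n\to v$ a.s., where $D,v$ are the same quantities formed with $c$ and the $\us(u_j)$.

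The limit is pinned down by Theorem~\ref{th1}(2): inserting $\us(u_j)$ and $\te_0$ into the right-hand side of \eqref{equ2.3} with $H=H(\te_0)$ returns $u_j$, which in matrix form reads $v=D\te_0$. Consequently, once $D$ has full column rank $q$, $D_n$ is of full rank for all large $n$, the unique unconstrained solution $\widetilde\te_n=(D_n^{\top}D_n)^{-1}D_n^{\top}v_n$ is defined, and $\widetilde\te_n\to(D^{\top}D)^{-1}D^{\top}v=\te_0$ a.s. Because $\te_0$ is an interior point of $\Theta$, $\widetilde\te_n\in\Theta$ for all large $n$, so it coincides with the constrained minimiser; hence $\tetan\to\te_0$ a.s., which is the assertion.

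Everything thus reduces to showing $D$ has rank $q$, and this is the heart of the proof. Suppose on the contrary that $\sum_{l=1}^{q}\beta_l d_{jl}=0$ for $j=1,\dots,m$ with $\beta\neq0$, where $d_{jl}$ are the entries of $D$. Set $\beta_0=-\sum_{l=1}^{q}l!\,\beta_l$ and $p(t)=\sum_{l=0}^{q}\beta_l t^l$, a nonzero polynomial of degree $\le q$ normalised so that $\int_0^{\infty}p(t)e^{-t}dt=0$; the assumed relation says precisely that $\Phi(\us):=\int_0^{\infty}tp(t)e^{-t}(1+t\us)^{-1}dt$ vanishes at the $m\ge q$ points $\us(u_1),\dots,\us(u_m)$, which are distinct (as $u\mapsto\us(u)$ is injective on $\mathring U$ by \eqref{equ2.3}) and lie in $(0,\infty)$. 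Using $\int p e^{-t}=0$ one has $\Phi(\us)=-\us^{-1}\int_0^{\infty}p(t)e^{-t}(1+t\us)^{-1}dt$; writing $x=1/\us>0$ and integrating by parts with the polynomial $P$ determined by $(P(t)e^{-t})'=p(t)e^{-t}$ — whose constant term vanishes thanks to the normalisation, so $P(t)=t\widetilde P(t)$ with $\deg\widetilde P\le q-1$ — turns $\Phi$, up to a nonvanishing factor, into $\int_0^{\infty}t\widetilde P(t)e^{-t}(x+t)^{-2}dt$. Since the kernel $(x+t)^{-2}$ is totally positive on $(0,\infty)^2$ and hence variation-diminishing, this transform has at most $\deg\widetilde P\le q-1$ zeros in $x>0$ unless it vanishes identically; $m\ge q$ zeros then force $\Phi\equiv0$, whence $p\equiv0$ by uniqueness of the Stieltjes transform, contradicting $\beta\neq0$. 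The delicate point — and the one I expect to require the most care — is exactly this zero count: extracting the extra degree from the normalisation $\int p e^{-t}=0$ and controlling multiplicities so that the sharp bound $q-1$ holds, which is what makes the stated hypothesis $m\ge q$ (rather than $m\ge q+1$) enough. If the sharp bound proves awkward, a fallback is to note that $D$ is rank-deficient only for $u$-nets in a Lebesgue-null set and to argue for almost every admissible net, but the total-positivity route is cleaner and needs no genericity.
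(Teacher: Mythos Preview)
Your proposal is correct and follows a genuinely different path from the paper's.

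The paper proves Theorem~\ref{th3} by declaring it ``similar to the proof of Theorem~\ref{th2}'' and supplying only the identifiability ingredient, Proposition~\ref{pro3}: if $\varphi(\theta)=0$ at $m\ge q$ distinct $u$-points then $\theta=\theta_0$.  The rest---pointwise convergence $\varphi_n\to\varphi$, confinement of $\tetan$ to a compact set, and a uniform-convergence/continuity argument---is borrowed wholesale from the discrete case.  For Proposition~\ref{pro3} the paper writes $\int \frac{t}{1+ts_j}\,p(t,\theta^*)e^{-t}\,dt=0$, splits $(0,\infty)$ at the positive roots of $p(\cdot,\theta^*)$, applies the first mean-value theorem on each piece, and then runs a polynomial-interpolation count.

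You instead exploit the affine dependence on $\theta$ to reduce S3 to an explicit linear least-squares problem $\|v_n-D_n\theta\|^2$, show $D_n\to D$ and $v_n\to v=D\theta_0$ by dominated convergence together with Theorem~\ref{th1}, and conclude by continuity of $(D^\top D)^{-1}D^\top v$.  Your ``full rank of $D$'' is logically the same statement as the paper's Proposition~\ref{pro3}, but you prove it via the variation-diminishing property of the totally positive kernel $(x+t)^{-2}$ after an integration by parts that trades the normalisation $\int p\,e^{-t}=0$ for one degree.

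What each route buys: your linear-algebra reduction sidesteps the compactness step entirely (no need to argue that minimisers cannot escape to $\partial\Theta$), and it immediately suggests how to obtain rates or a CLT from the smoothness of $(D_n,v_n)\mapsto\tetan$.  The paper's template has the virtue of reusing the machinery already built for Theorem~\ref{th2}, and its interpolation argument is more elementary in spirit than invoking total positivity.  The point you flag as delicate---turning ``$\le q-1$ sign changes'' into ``$\le q-1$ zeros''---is real; it is handled by the extended-total-positivity theory (Karlin), so your sketch stands, but it is worth stating explicitly which result you invoke.
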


\section{Simulation experiments}

In this section, simulations are carried out to compare our LSE with the approximate quasi-likelihood estimate in \cite{RaoJ08} (referred as RMSE) and the moment estimate in \cite{Bai2010} (referred as BCY). We do not include the estimator of \cite{KarE08} in this study since this estimator is nonparametric using a suitable approximation dictionary while the LSE is based on a parametric form of unknown PSDs.

We study five different PSDs: three of them are discrete and two continuous. Samples are drawn from mean-zero real normal population with the dimensions $n=500$ and $p=100, 500, 1000$. Statistics are computed from $1000$ independent replications.

To evaluate the quality of an estimate $\widehat{H}=H(\widehat{\theta})$, instead of looking at individual values ($\widehat{\theta}_i$) of the parameters, we use a global distance, namely the Wasserstein distance $W=\int |Q_{H}(t)-Q_{\widehat{H}}(t)|dt$ where $Q_{\mu}(t)$ is the quantile  function of distribution $\mu$. The use of Wasserstein distance is motivated by the fact that it applies to both discrete and continuous distributions (unlike other common distance like kullback-leibler or $L_2$ distance).

For the LSE, we need to choose a $u$-net from $S_{F_n}^c\cap S_{F}^c\setminus\{0\}$. When $H$ has finite support, the upper and lower bounds of $S_{F}\setminus\{0\}$ can be estimated respectively by $\lambda_{max}=\max\{\lambda_i\}$ and $\lambda_{min}=\min\{\lambda_i: \lambda_i>0\}$ where $\lambda_i$'s are sample eigenvalues. As a consequence, we design a primary set:
$$
\mathcal U=
\begin{cases}
(-10, 0)\cup(0, 0.5\lambda_{min})\cup(5\lambda_{max}, 10\lambda_{max})
& (\text{discrete model},\ p\neq n),\\
(-10, 0)\cup(5\lambda_{max}, 10\lambda_{max})
& (\text{discrete model},\ p=n),\\
(-10, 0)& (\text{continuous model}).
\end{cases}
$$
Next, we choose $l$ equally spaced  $u$-points from each individual interval of $\mathcal U$. We name this process as adaptive choice of $u$-net. Here we set $l=20$ for all cases considered in simulation, that is, for example we take $\{-10+10t/21,t=1,\ldots,20\}$ from the first interval.

{\bf Case 1: $H=0.5\delta_1+0.5\delta_2$.} This is a simple case as $H$ has only two atoms with equal weights. Table \ref{table1} shows that all the three estimates are consistent, and their efficiency is very close.

\begin{table}
\begin{center}
\caption{Wasserstein distances of estimates for $H=0.5\delta_1+0.5\delta_2$.}
\begin{tabular}{llccccccccc}
\hline
&             &$p/n=0.2$&$p/n=1$&$p/n=2$\\
\hline
LSE      &Mean&0.0437&0.0601&0.0893\\
         &S.D.&0.0573&0.0735&0.1077\\
RMSE     &Mean&0.0491&0.0689&0.0859\\
         &S.D.&0.0320&0.0482&0.0629\\
BCY      &Mean&0.0500&0.0664&0.0871\\
         &S.D.&0.0331&0.0466&0.0617\\
\hline
\end{tabular}
\label{table1}
\end{center}
\end{table}

{\bf Case 2: $H=0.3\delta_1+0.4\delta_3+0.3\delta_5$.} In this case, we increase the order of $H$. Analogous statistics are summarized in Table \ref{table2}. The results show that LSE clearly outperforms RMSE and BCY in the light of the Wasserstein distance. Particularly, RMSE and BCY have not converged yet with dimensions $n=500$ and $p=500, 1000$, while LSE only contains a small bias in such situations. This exhibits the robustness of our method with respect to the increase of the order.

\begin{table}
\begin{center}
\caption{Wasserstein distances of estimates for $H=0.3\delta_1+0.4\delta_3+0.3\delta_5$.}
\begin{tabular}{llccccccccc}
\hline
&             &$p/n=0.2$&$p/n=1$&$p/n=2$\\
\hline
LSE      &Mean&0.1589&0.3566&0.4645\\
         &S.D.&0.1836&0.4044&0.5156\\
RMSE     &Mean&0.2893&0.7494&0.8153\\
         &S.D.&0.0966&0.2188&0.1080\\
BCY      &Mean&0.2824&0.5840&0.7217\\
         &S.D.&0.1769&0.2494&0.2156\\
\hline
\end{tabular}
\label{table2}
\end{center}
\end{table}

{\bf Case 3: $H=0.3\delta_1+0.4\delta_5+0.3\delta_{15}$.} In this case, we increase the variance of $H$. Table \ref{table3} collects the simulation results. Compared with Table \ref{table2}, RMSE and BCY deteriorate significantly while LSE remains stable. The average Wasserstein distances of LSE are (at least) a third less than those of RMSE and BCY for all $p$ and $n$ used. This demonstrates the robustness of our method with respect to the increase of the variance.

\begin{table}
\begin{center}
\caption{Wasserstein distances of estimates for $H=0.3\delta_1+0.4\delta_5+0.3\delta_{15}$.}
\begin{tabular}{llccccccccc}
\hline
&             &$p/n=0.2$&$p/n=1$&$p/n=2$\\
\hline
LSE      &Mean&0.1756&0.2524&0.5369\\
         &S.D.&0.2105&0.3013&0.6282\\
RMSE     &Mean&0.7090&1.4020&1.9160\\
         &S.D.&0.0524&0.6501&0.2973\\
BCY      &Mean&0.9926&1.5379&1.8562\\
         &S.D.&0.5618&0.6875&0.7526\\
\hline
\end{tabular}
\label{table3}
\end{center}
\end{table}

{\bf Case 4: $h(t)=(\alpha_0+\alpha_1t)e^{-t}, \alpha_1=1$.} This is the simplest continuous model with only one parameter to be estimated. In this case, $H$ is a gamma distribution with shape parameter 2 and scale parameter 1. Statistics in Table \ref{table4} show that all the three estimates have similar efficiency.

\begin{table}
\begin{center}
\caption{Wasserstein distances of estimates for $h(t)=te^{-t}$.}
\begin{tabular}{llccccccccc}
\hline
&             &$p/n=0.2$&$p/n=1$&$p/n=2$\\
\hline
LSE      &Mean&0.0939&0.0441&0.0294\\
         &S.D.&0.0704&0.0317&0.0229\\
RMSE     &Mean&0.1126&0.0508&0.0346\\
         &S.D.&0.0839&0.0393&0.0262\\
BCY      &Mean&0.1168&0.0491&0.0348\\
         &S.D.&0.0881&0.0361&0.0268\\
\hline
\end{tabular}
\label{table4}
\end{center}
\end{table}

{\bf Case 5: $h(t)=(\alpha_0+\alpha_1t+\alpha_2t^2+\alpha_3t^3)e^{-t}, \alpha_1=\alpha_2=\alpha_3=1/9$.} This model with three parameters becomes more difficult to estimate. RMSE and BCY have large bias and/or large standard deviations in all dimensions we used, see Table \ref{table5}. In contrast, our LSE performs fairly well and again outperform these two moment based methods.

\begin{table}
\begin{center}
\caption{Wasserstein distances of estimates for $h(t)=(t+t^2+t^3)e^{-t}/9$.}
\begin{tabular}{llccccccccc}
\hline
&             &$p/n=0.2$&$p/n=1$&$p/n=2$\\
\hline
LSE      &Mean&0.1895&0.0902&0.0740\\
         &S.D.&0.1103&0.0526&0.0378\\
RMSE     &Mean&0.3163&0.1515&0.1156\\
         &S.D.&0.2062&0.0863&0.0670\\
BCY      &Mean&0.3139&0.1554&0.1114\\
         &S.D.&0.2007&0.0907&0.0624\\
\hline
\end{tabular}
\label{table5}
\end{center}
\end{table}

In summary, the LSE outperforms the RMSE and BCY estimators in all the tested situations. On the other hand, as expected, the performances of the RMSE and the BCY estimators are very close since they are all based on empirical moments (however, as explained in \cite{Bai2010}, the BCY estimator is much easier to implement).

Finally, we analyze the relationship between the size of a $u$-net and the efficiency of LSE. The average of Wasserstein distances of LSE with respect to different $l$ values (the number of $u$-points picked from each individual interval) is plotted for Case 3 and Case 5, see Figure \ref{fig2}. The results show that unless $l$ is too small, the estimation efficiency remains remarkably stable with different values of $l$.

\begin{figure}[h]
\begin{minipage}[t]{0.5\linewidth}
\includegraphics[width=2.5in,height=1.8in]{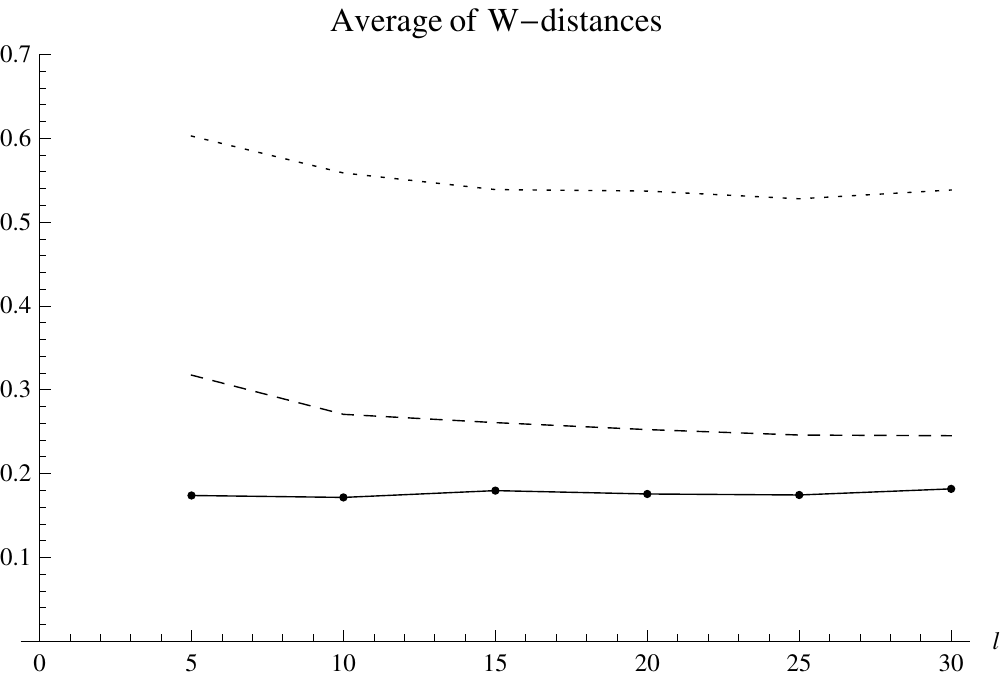}
\end{minipage}%
\begin{minipage}[t]{0.5\linewidth}
\includegraphics[width=2.5in,height=1.8in]{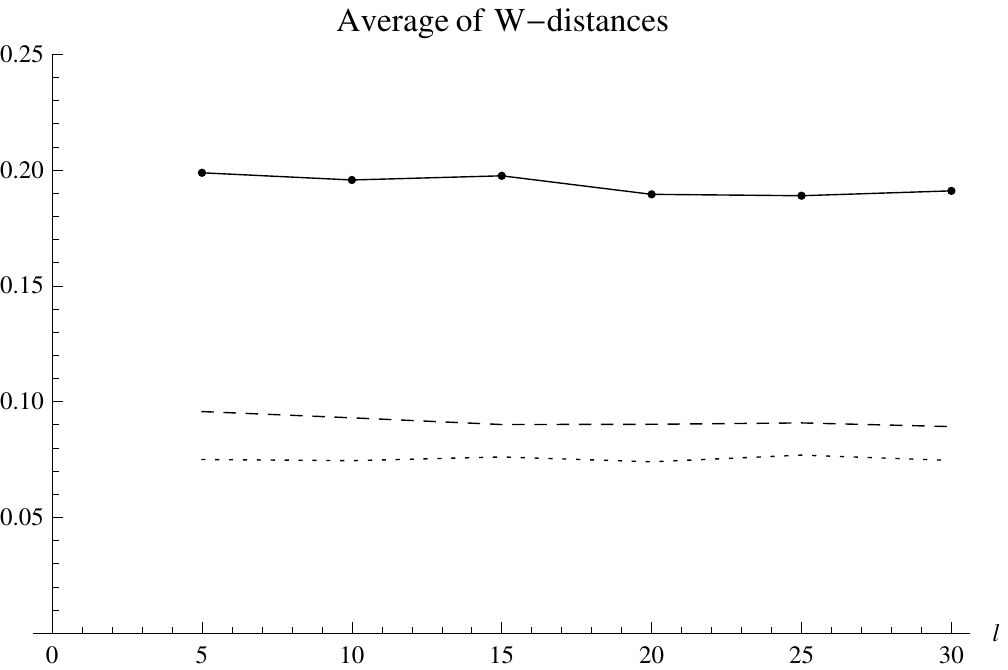}
\end{minipage}
\caption{The average of Wasserstein distances of LSE with respect to $l$ ($l=5,10,\ldots,30$) for Case 3 (left) and Case 5 (right) with $p=100, n=500$ (solid lines), $p=500, n=500$ (dashed lines), and $p=1000, n=500$ (dotted lines).}
\label{fig2}
\end{figure}

\section{Application to S\&P 500 stocks data}

In this section, we present a financial application of our estimation procedure in analysing an empirical correlation matrix of stock returns. We study a set of 488 U.S. stocks included in the S\&P 500 index from September, 2007 to September 2011 (1001 trading days, 12 stocks have been removed because of missing values). Here, the data dimension is $p=488$ and the number of observations is $n=1000$.

Following \cite{BP09}, we suppose that there is a PSD $H(\alpha)$ for the stock returns with an inverse cubic density $h(t|\alpha)$:
$$h(t| \alpha)=\frac{c}{(t-a)^3}I(t\geq\alpha),\quad 0\leq \alpha< 1,$$
where $c=2(1-\alpha)^2$ and $a=2\alpha-1$. Notice that when $\alpha\rightarrow1^-$, the inverse cubic model tends to the MP case ($H=\delta_1$), so that this prior model is very flexible.

For the estimation procedure, we first remove the 6 largest sample eigenvalues which are deemed as spikes over the bulk of sample eigenvalues. As in Section 3, we use $l=20$ equally spaced $u$-points in $(-10, 0)$. The LSE of $\alpha$ turns out to be $\widehat{\alpha}=0.4380$. The RMSE and BCY don't exist for this model for the reason that the moments of $H$ don't depend on the unknown parameter.

Limiting spectral densities corresponding to the LSE estimate $h(t|0.4380)$ and $H=\delta_1$ are shown in Figure \ref{fig3}. We also plot the empirical spectral density of the correlation matrix, and the curve is smoothed by using a Gaussian kernel estimate with bandwidth $h=0.05$.

\begin{figure}[h]
\begin{center}
\includegraphics[width=2.5in,height=1.8in]{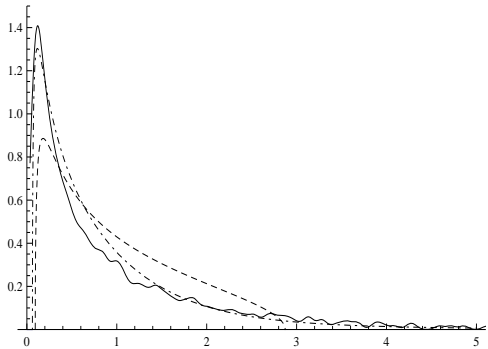}
\caption{The empirical density of the sample eigenvalues (plain black line), compared to the MP density (dashed line) and the limiting spectral density corresponding to the LSE estimate $h(t|0.4380)$ (dashed-dotted line). }
\label{fig3}
\end{center}
\end{figure}

From Figure \ref{fig3}, we could see that the MP density is far away from the empirical density curve. This confirms a widely believed fact that the correlation matrix may have more structure than just several spikes on top of the identity matrix. By contrast, the cubic model with $\alpha=0.4380$ yields a much more satisfying fit to the empirical density curve.

\section {Proofs}

We first recall useful results in three lemmas.  The first one is
provided  in \cite{Silverstein95} and the two others
in \cite{SilversteinC95}.

\begin{lemma}\label{lema1}
Assume that the assumptions (a)-(b)-(c) hold. Then, almost surely, the empirical spectral distribution $F_n$ converges in distribution, as $n\rightarrow\infty,$ to a non-random probability measure $F$,
whose Stieltjes transform $s=s(z)$ is a solution to the equation
\begin{eqnarray*}
s=\int\frac{1}{t(1-c-czs)-z}dH(t).
\end{eqnarray*}
The solution is also unique in the set $\{s\in {\mathbb C}: -(1-c)/z+cs\in{\mathbb C^+}\}.$
\end{lemma}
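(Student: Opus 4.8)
The plan is to follow the Stieltjes transform method, since the statement is precisely the Mar\v{c}enko--Pastur--Silverstein theorem established in \cite{Silverstein95}. Write $s_n(z)=\frac1p\mathop{\rm tr}(S_n-zI_p)^{-1}$ for the Stieltjes transform of $F_n$, which is analytic on $\CC^+$ and satisfies $|s_n(z)|\le 1/\Im(z)$. By the continuity theorem for Stieltjes transforms, the almost sure weak convergence $F_n\Rightarrow F$ is equivalent to the almost sure pointwise convergence $s_n(z)\to s(z)$ on $\CC^+$ together with the identification of $s$ as the Stieltjes transform of a probability measure. Accordingly I would split the argument into (i) a concentration step showing $s_n(z)-\E s_n(z)\to 0$ almost surely, and (ii) an identification step showing that $\E s_n(z)$ converges to the unique solution $s(z)$ of the stated equation.

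For step (i), I would fix $z\in\CC^+$ and form the Doob martingale obtained by successively conditioning on the columns $w_{.1},\dots,w_{.n}$ of $W_n$. Deleting one column from $S_n$ is a rank-one perturbation, so the resolvent identity together with the interlacing bound $|\mathop{\rm tr}(A-zI)^{-1}-\mathop{\rm tr}(B-zI)^{-1}|\le 1/\Im(z)$, valid whenever the Hermitian matrices $A,B$ differ by a rank-one matrix, shows that each martingale difference is $O(1/n)$. Burkholder's inequality then yields $\E|s_n(z)-\E s_n(z)|^{4}=O(n^{-2})$, and Borel--Cantelli gives the almost sure convergence to zero. Applying this on a countable dense subset of $\CC^+$ and invoking Vitali's theorem for the locally bounded analytic functions $s_n$ upgrades pointwise almost sure convergence to almost sure convergence on all of $\CC^+$.

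For step (ii) it is cleaner to pass to the companion matrix $\underline{S}_n=\frac1n W_n^*\Si_p W_n$ and its transform $\us_n(z)=\frac1n\mathop{\rm tr}(\underline{S}_n-zI_n)^{-1}$, related to $s_n$ by $\us_n(z)=-(1-p/n)/z+(p/n)s_n(z)$. Using the Sherman--Morrison formula to isolate the contribution of a single column, together with the trace-concentration approximation $w_{.j}^*A\,w_{.j}\approx \frac1n\mathop{\rm tr}(\Si_p A)$, one derives the approximate self-consistent relation $z=-1/\us_n+\frac pn\int t/(1+t\us_n)\,dH_p(t)+o(1)$. Invoking Assumption (c), $H_p\Rightarrow H$, and Assumption (a), $p/n\to c$, and passing to the limit gives $z=-1/\us+c\int t/(1+t\us)\,dH(t)$. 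Substituting $\us=-(1-c)/z+cs$ and using the elementary identity $1-c-czs=-z\us$ reduces this to $s=-\frac1z\int dH(t)/(1+t\us)=\int[\,t(1-c-czs)-z\,]^{-1}dH(t)$, which is the displayed equation.

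Uniqueness in the set $\{s:-(1-c)/z+cs\in\CC^+\}$ follows because any such $s$ corresponds to $\us\in\CC^+$, and for $z\in\CC^+$ the fixed-point equation for $\us$ has a unique solution with $\Im\us>0$: subtracting the equation written for two candidate solutions and using the strict positivity of $\int t^2|1+t\us|^{-2}dH(t)$ forces them to coincide, which is the uniqueness assertion of \cite{Silverstein95}. I expect the main obstacle to be the trace-concentration estimate in step (ii): controlling the quadratic forms $w_{.j}^*A\,w_{.j}$ under only the second-moment hypotheses of Assumption (b) requires a preliminary truncation and centralization of the entries $w_{ij}$ followed by careful moment bounds, which is the technical heart of the Silverstein argument.
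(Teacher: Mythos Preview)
Your sketch is a faithful outline of the Silverstein argument in \cite{Silverstein95}, and the paper itself does not prove this lemma at all: it is stated as a cited result (``The first one is provided in \cite{Silverstein95}'') and used as a black box. So your approach is correct and, by construction, coincides with what the paper invokes; there is nothing to compare beyond noting that the paper simply quotes the result rather than reproving it.
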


\begin{lemma}\label{lema2}
If $u\in S_{F}^c\setminus\{0\}$, then $\underline{s}=\underline{s}(u)$ satisfies
\begin{align*}
\text{\rm (1)}\ & \underline{s}\in \mathbb{R}\setminus\{0\}, &
\text{\rm (2)}\ & (-\underline{s})^{-1}\in S_H^c, &
\text{\rm (3)}\ du/d\underline{s}>0.
\end{align*}
Conversely, if $\underline{s}$ satisfies {\rm(1)-(3)}, then $u=u(\underline{s})\in S_{F}^c\setminus\{0\}$.
\end{lemma}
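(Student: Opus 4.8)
The statement is the support characterization of the limiting spectral distribution in the spirit of Silverstein and Choi, and my plan is to reconstruct it from the analytic theory of Stieltjes transforms together with the inversion supplied by Lemma~\ref{lema1}. Write $\underline{F}=cF+(1-c)\de_0$ for the companion distribution, so that $\us$ is exactly its Stieltjes transform, $\us(z)=\int(x-z)^{-1}\,d\underline{F}(x)$, and note the one fact I really need: since $\underline{F}$ and $F$ share the same spectrum away from the origin, $S_{\underline F}\setminus\{0\}=S_F\setminus\{0\}$, so $u\in S_F^c\setminus\{0\}$ implies $u\notin S_{\underline F}$. On the open set $B=\{\us\in\mathbb R\setminus\{0\}:(-\us)^{-1}\in S_H^c\}$ the right-hand side of \eqref{equ2.3} defines a real-analytic function $u(\us)$, because there $1+t\us$ is bounded away from $0$ uniformly in $t\in S_H$ and the integral converges.

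For the forward direction, fix $u\in S_F^c\setminus\{0\}$, so $u\notin S_{\underline F}$ and $\us$ continues analytically across a real neighbourhood of $u$. As $\underline{F}$ is a real measure, $\us(u)=\int(x-u)^{-1}\,d\underline{F}(x)$ is real, which is the reality assertion in (1), and $\us'(u)=\int(x-u)^{-2}\,d\underline{F}(x)>0$. I then transport the MP equation to $u$: by Lemma~\ref{lema1} every $z\in\CC^+$ satisfies \eqref{equ2.3} with $\us=\us(z)$, and by analytic continuation this identity persists at the real point $u$. Finiteness of the left-hand side forces $\int t(1+t\us(u))^{-1}\,dH(t)$ to converge, which is possible only if $-1/\us(u)\notin S_H$, i.e.\ $(-\us)^{-1}\in S_H^c$, giving (2); the same equation rules out $\us(u)=0$, since then the term $-1/\us$ would diverge, completing (1). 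Finally $u\mapsto\us(u)$ and $\us\mapsto u(\us)$ are local inverses, so $du/d\us=1/\us'(u)>0$, which is (3).

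For the converse, suppose $\us_0$ satisfies (1)--(3) and set $u_0=u(\us_0)$. Since $du/d\us>0$ at $\us_0$ and $u(\cdot)$ is real-analytic on $B$, it is strictly increasing near $\us_0$ and hence admits a real-analytic, real-valued local inverse on an interval $I\ni u_0$. The plan is to show this inverse is the boundary value of the genuine Stieltjes transform $\us$, so that $\Im\us\equiv0$ on $I$; by the Stieltjes inversion formula $\underline{F}$ then carries no mass on $I$, whence $u_0\notin S_{\underline F}\supseteq S_F$. Identifying the local inverse with $\us$ is where the work lies: approaching $u_0$ from $\CC^+$, I would use the uniqueness clause of Lemma~\ref{lema1} to determine which branch of \eqref{equ2.3} is selected and verify it is the real-analytic branch just constructed, the condition $du/d\us>0$ being precisely what forces the correct branch and prevents $\Im\us>0$. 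It remains to check $u_0\neq0$; since $\us_0$ is finite and nonzero with $(-\us_0)^{-1}\in S_H^c$, all terms of \eqref{equ2.3} are finite and controlled, and I would exclude $u_0=0$ by the same boundary analysis.

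The main obstacle is this converse: establishing that $du/d\us>0$ together with (1)--(2) genuinely certifies a point outside the support, i.e.\ that the locally increasing real branch of $u(\us)^{-1}$ really is the restriction of $\us$ and that $\Im\us$ vanishes identically there. Everything in the forward direction is soft (analyticity of the transform, positivity of the second-moment kernel, analytic continuation of \eqref{equ2.3}), whereas the converse rests on the careful branch-selection and Stieltjes-inversion argument of \cite{SilversteinC95}, which I would invoke rather than re-derive in full.
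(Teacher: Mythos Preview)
The paper does not prove this lemma at all: it is stated as one of three ``useful results'' that are \emph{recalled}, with Lemmas~\ref{lema2} and~\ref{lema3} explicitly attributed to \cite{SilversteinC95}. So there is no in-paper proof to compare your attempt against; what you have written is a sketch of the original Silverstein--Choi argument, and your final paragraph correctly identifies that the substantive content of the converse direction lives in that reference rather than in anything you have supplied.

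On the forward direction, your derivation of (2) has a soft spot. You argue that analytic continuation of \eqref{equ2.1} to the real point $u$ yields a finite left-hand side and hence forces $\int t(1+t\us(u))^{-1}\,dH(t)$ to be finite, concluding $-1/\us(u)\notin S_H$. But analytic continuation of the identity presupposes that the right-hand side already continues analytically through $u$, which is exactly what you are trying to show; and even granting convergence of the integral, mere finiteness of $\int t(1+t\us)^{-1}\,dH(t)$ does not by itself exclude $-1/\us\in S_H$ (think of $H$ with a density vanishing there). The actual argument in \cite{SilversteinC95} proceeds differently: one first takes imaginary parts of \eqref{equ2.1} along $z=u+iv$, $v\downarrow 0$, and uses $\Im\us(z)>0$ together with $\Im\us(z)/v\to\us'(u)$ to extract the stronger integrability $\int t^2(1+t\us(u))^{-2}\,dH(t)<\infty$, from which $(-\us(u))^{-1}\in S_H^c$ follows cleanly. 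Your treatment of (1) and (3) is fine.
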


\begin{lemma}\label{lema3}
Set $B=\{\underline{s}\in{\mathbb R}\backslash\{0\}:(-\underline{s})^{-1}\in S_H^c\}$. Let $[\underline{s}_1, \underline{s}_2]$, $[\underline{s}_3, \underline{s}_4]$ be two disjoint intervals in $B$ satisfying for all $\underline{s}\in [\underline{s}_1, \underline{s}_2]\cup[\underline{s}_3, \underline{s}_4]$, $du/d\underline{s}>0$. Then $[u_1, u_2], [u_3, u_4]$ are disjoint where $u_i = u(\underline{s}_i),
i = 1, 2, 3, 4$.
\end{lemma}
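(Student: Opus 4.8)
The plan is to prove Lemma \ref{lema3} by exploiting the monotonicity hypothesis $du/d\underline{s}>0$ together with the fact that $u=u(\underline{s})$ is a well-defined, continuous, single-valued function on $B$. The key structural observation is that $B$ is an open subset of $\mathbb{R}\setminus\{0\}$ (because $S_H^c$ is open and $\underline{s}\mapsto(-\underline{s})^{-1}$ is a homeomorphism of $\mathbb{R}\setminus\{0\}$), so it decomposes into at most countably many disjoint open intervals. The function $u(\underline{s})$, defined through the right-hand side of \eqref{equ2.3} as $u(\underline{s})=-1/\underline{s}+c\int t(1+t\underline{s})^{-1}\,dH(t)$, is real-analytic on each component of $B$ because the integrand has no pole there (the condition $(-\underline{s})^{-1}\in S_H^c$ guarantees $1+t\underline{s}\neq0$ for $t$ in the support of $H$).

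First I would establish that on any single interval where $du/d\underline{s}>0$ holds throughout, the map $\underline{s}\mapsto u(\underline{s})$ is strictly increasing, hence injective, so that the image of $[\underline{s}_1,\underline{s}_2]$ is exactly the interval $[u_1,u_2]$ with $u_1<u_2$, and likewise $[\underline{s}_3,\underline{s}_4]$ maps to $[u_3,u_4]$ with $u_3<u_4$. The real content of the lemma is therefore not the behavior on a single interval but the claim that two disjoint source intervals produce two disjoint target intervals, even when $[\underline{s}_1,\underline{s}_2]$ and $[\underline{s}_3,\underline{s}_4]$ may lie in different connected components of $B$ separated by a gap corresponding to a cluster of the support $S_H$. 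I would handle this by invoking Lemma \ref{lema2}: each point $\underline{s}$ with properties (1)--(3) satisfies $u(\underline{s})\in S_F^c\setminus\{0\}$, and conversely $\underline{s}(u)$ is a genuine single-valued function of $u$ on $S_F^c\setminus\{0\}$. Thus the correspondence $\underline{s}\leftrightarrow u$ is a bijection between the relevant portion of $B$ and a portion of $S_F^c\setminus\{0\}$.

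With this bijectivity in hand, disjointness follows by contradiction. Suppose $[u_1,u_2]$ and $[u_3,u_4]$ overlap, so there is a common value $u^*$ lying in both images. Then $u^*=u(\underline{s}')=u(\underline{s}'')$ for some $\underline{s}'\in[\underline{s}_1,\underline{s}_2]$ and $\underline{s}''\in[\underline{s}_3,\underline{s}_4]$ with $\underline{s}'\neq\underline{s}''$ (the source intervals being disjoint). But Lemma \ref{lema2} asserts that $\underline{s}=\underline{s}(u)$ is well-defined as a function of $u\in S_F^c\setminus\{0\}$, forcing $\underline{s}'=\underline{s}''$, a contradiction. The main obstacle I anticipate is making the two-sided correspondence in Lemma \ref{lema2} rigorous as a statement of single-valuedness across distinct components of $B$: I must verify that the same branch of the inverse Stieltjes transform $\underline{s}(u)$ is being used consistently, so that two preimages of a single $u^*$ really are identical and not merely two legitimate solutions of the MP equation \eqref{equ2.3} sitting on different branches. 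I would close this gap by appealing to the fact that $\underline{s}(u)$ is the Stieltjes transform of a fixed measure and is therefore genuinely single-valued on its domain $S_F^c\setminus\{0\}$, which pins down a unique value for each admissible $u$ and completes the argument.
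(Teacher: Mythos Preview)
The paper does not prove Lemma~\ref{lema3}; it is quoted from \cite{SilversteinC95}, so there is no in-paper argument to compare against.

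Your proposed argument, however, has a genuine circularity. You want to conclude from $u(\underline{s}')=u(\underline{s}'')=u^*\in S_F^c\setminus\{0\}$ that $\underline{s}'=\underline{s}''$, and you do this by invoking the single-valuedness of the Stieltjes transform $\underline{s}(u)$. But that only tells you $\underline{s}(u^*)$ is a single number; it does not tell you that $\underline{s}'=\underline{s}(u^*)$. What you actually need is that every solution in $B^+$ of the equation $u(\underline{s})=u^*$ coincides with $\underline{s}(u^*)$, i.e.\ uniqueness of the solution in $B^+$. In this paper that uniqueness is precisely Theorem~\ref{th1}(3), and its proof \emph{uses} Lemma~\ref{lema3} (see Section~\ref{th1proof}: Lemma~\ref{lema2} supplies surjectivity of $u_{B^+}$, while Lemma~\ref{lema3} supplies injectivity). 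So your contradiction step is assuming the very thing at stake. Lemma~\ref{lema2} together with the MP equation gives you maps $u:B^+\to S_F^c\setminus\{0\}$ and $\underline{s}:S_F^c\setminus\{0\}\to B^+$ with $u\circ\underline{s}=\mathrm{id}$, but not $\underline{s}\circ u=\mathrm{id}$, and it is the latter your argument requires.

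A non-circular proof has to work directly with the explicit function $\underline{s}\mapsto u(\underline{s})=-1/\underline{s}+c\int t(1+t\underline{s})^{-1}\,dH(t)$, analysing its behaviour on and between the two intervals (and across the gaps of $B$) without presupposing a global inverse on $B^+$.
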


\subsection{Proof of Theorem \ref{th1}}\label{th1proof}
The first conclusion follows from two convergence theorem. In fact, for any fixed $u\in \mathring{U}$ there are $\varepsilon_0\in \mathbb R^+$ and $n_0\in \mathbb Z^+$ such that $U(u,\varepsilon_0)\subset\cap_{n=n_0}^{\infty}S_{F_n}^c\setminus\{0\}.$ This implies that for all $x\in S_{F_n}$ and $n>n_0$, we have $|1/(x-u)|<1/\varepsilon_0.$ From this and Lebesgue's dominated convergence theorem, for any fixed $u\in \mathring{U}$, almost surely,
\begin{eqnarray*}
\underline{s}_n(u)\rightarrow\underline{s}(u),
\end{eqnarray*}
as $n\rightarrow\infty$ with $p/n\rightarrow c>0$. By Vitali's convergence theorem
\citep{Tbook}, we may conclude that $\underline{s}_n(u)$ converges almost surely for every $u\in \mathring{U}$.

Next, we consider the second conclusion. For any fixed $u\in S_F^c$ and $\varepsilon>0$, let $z=u+\varepsilon i$, from Lemma \ref{lema1}, $\underline{s}(z)$ satisfies \eqref{equ2.1}. On the other hand, according to Lemma \ref{lema2}, $(-\underline{s}(u))^{-1}\in S_H^c$ and thus $|t/(1+t\underline{s}(z))|$ is bounded on the set $\{(t,\varepsilon):S_H\times[0,1]\}$. Therefore, by Lebesgue's dominated convergence theorem, taking the limit as $\varepsilon\rightarrow 0^{+}$ on both sides of \eqref{equ2.1} gets the conclusion.

Conclusion 3 follows from the results of Lemma \ref{lema2} and Lemma \ref{lema3}. In fact, let $u_{B^+}(\underline{s})$ be the restriction of $u(\underline{s})$ to $B^+$, then Lemma \ref{lema2} shows that the range of $u_{B^+}(\underline{s})$ is $S_{F}^c\setminus\{0\}$. Lemma \ref{lema3} indicates that $u_{B^+}(\underline{s})$ is also an injection. Therefore, $u_{B^+}(\underline{s})$ is a bijection from $B^+$ to $S_{F}^c\setminus\{0\}$.

As to the last conclusion, suppose $H_1(t)$ and $H_2(t)$ are two population spectral distribution functions satisfying, for all $\underline{s}\in (a, b)$,
\begin{eqnarray}
\int\frac{t}{1+t\underline{s}}dH_1(t)=\int\frac{t}{1+t\underline{s}}dH_2(t).\label{equ7.2}
\end{eqnarray}
We are going to show $H_1=H_2$ almost everywhere with respect to Lebesgue measure on $\mathbb R$.

For any $\underline{s}_0\in (a, b)$, $-1/\underline{s}_0$ is an inner point of $S_H^c$, then there is $\delta_0>0$ such that $$U(-1/\underline{s}_0, \delta_0/|\underline{s}_0|)\subset S_H^c,$$ which implies $|1+t\underline{s}_0|>\delta_0$ for all $t\in S_H$. Choose $\varepsilon_0=\min\{|\underline{s}_0|\delta_0/(1+\delta_0),
b-\underline{s}_0, \underline{s}_0-a\}$. Then, for any $\underline{s}\in U(\underline{s}_0,\varepsilon_0)$, $1+t\underline{s}$ has the same sign as $1+t\underline{s}_0$. Define
$$
g(t,u)=
\begin{cases}
1 &(1+t\underline{s}_0>0,\ u>0),\cr
-1&(1+t\underline{s}_0<0,\ u<0),\cr
0 &(u(1+t\underline{s}_0)u\leq0).
\end{cases}
$$
We have then
\begin{eqnarray*}
\frac{t}{1+t\underline{s}}&=&\int_{-\infty}^{+\infty}
g(t,u)e^{-(1/t+\underline{s})u}du,\ \underline{s}\in
U(\underline{s}_0,\varepsilon_0).
\end{eqnarray*}
Therefore, each side of \eqref{equ7.2} can be expressed as
\begin{eqnarray}
\int\frac{t}{1+t\underline{s}}dH_i(t)=\int_{-\infty}^{+\infty}\int_{0}^{\infty}g(t,u)
e^{-(\frac{1}{t}+\underline{s}_0)u}dH_i(t)e^{-(\underline{s}-\underline{s}_0)u}du.\label{equ7.3}
\end{eqnarray}

It is clear that the left hand side of \eqref{equ7.3} is the Laplace transform of $$\int_{0}^{\infty}g(t,u)
e^{-(\frac{1}{t}+\underline{s}_0)u}dH_i(t).$$ By the uniqueness of Laplace transform, we have then
\begin{eqnarray*}
\int_{0}^{\infty}g(t,u)e^{-\frac{1}{t}u}dH_1(t)=\int_{0}^{\infty}g(t,u)e^{-\frac{1}{t}u}dH_2(t),
\end{eqnarray*}
and thus $H_1=H_2$ almost everywhere.

\subsection{ Proof of Theorem \ref{th2}}\label{th2proof}
Define
\begin{equation}
\varphi(\theta)=\sum_{j=1}^m\bigg(u_j-u(s_j,\theta)\bigg)^2,\label{equ7.6}\nonumber
\end{equation}
where $s_j=\underline{s}(u_j)\ (j=1,\ldots,m)$. We first state and prove the following proposition.

\begin{proposition}\label{pro2}
If $u_1,\ldots,u_m$ are distinct and $m\geq q=2k-1$, then $\varphi(\theta)=0$ for the discrete model has a unique solution $\theta_0$ on $\Theta$.
\end{proposition}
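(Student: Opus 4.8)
The plan is to separate existence from uniqueness; existence is trivial, so uniqueness carries all the content. For existence, observe that by Theorem~\ref{th1}(2) the pair $(u_j, s_j)$ with $s_j = \underline{s}(u_j)$ solves the MP equation \eqref{equ2.3} for $H = H(\theta_0)$, i.e. $u_j = u(s_j, \theta_0)$ for every $j$; hence $\varphi(\theta_0) = 0$. For uniqueness, suppose $\varphi(\theta) = 0$ for some $\theta = (a_1,\dots,a_k,m_1,\dots,m_{k-1}) \in \Theta$. Since every summand of $\varphi$ is nonnegative, this forces $u(s_j,\theta) = u_j = u(s_j,\theta_0)$ for all $j = 1,\dots,m$; the $-1/s_j$ terms cancel and we are left with
$$\sum_{i=1}^{k}\frac{a_i m_i}{1+a_i s_j}=\sum_{i=1}^{k}\frac{a_i^0 m_i^0}{1+a_i^0 s_j},\qquad j=1,\dots,m.$$

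The next step is to view the difference as a single rational function $R(s)=\sum_{i=1}^{k}\frac{a_i m_i}{1+a_i s}-\sum_{i=1}^{k}\frac{a_i^0 m_i^0}{1+a_i^0 s}$ and exploit the following facts: (i) the poles of $R$ are simple and lie among the at most $2k$ points $\{-1/a_i\}\cup\{-1/a_i^0\}$; (ii) none of the $s_j$ is a pole of $R$, by condition \eqref{equ3.1}; (iii) the $s_j$ are pairwise distinct, because the $u_j$ are distinct and, by the proof of Theorem~\ref{th1}(3), $\underline{s}$ is injective on $S_F^c\setminus\{0\}\supset\mathring U$. Moreover, since $\sum_i m_i = \sum_i m_i^0 = 1$, each of the two sums equals $1/s + O(s^{-2})$, so $R(s) = O(s^{-2})$ as $s\to\infty$. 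Writing $R = N/D$ over the common denominator $D$, fact (i) together with this decay rate gives $\deg N \le 2k-2$. But, clearing $D(s_j)\ne 0$ by (ii), $N$ vanishes at the $m \ge 2k-1$ distinct points $s_j$, so $N\equiv 0$ and hence $R\equiv 0$.

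It remains to read off $\theta = \theta_0$ from $R\equiv 0$. By uniqueness of the partial fraction expansion, a pole location occurring in only one of the two sums would contribute a nonzero residue $m_i$ (or $m_i^0$), which is impossible since the weights are strictly positive; hence $\{a_1,\dots,a_k\} = \{a_1^0,\dots,a_k^0\}$ as sets, and comparing the residue at each common pole $-1/a_i$ equates the matched weights. The orderings $a_1<\dots<a_k$ and $a_1^0<\dots<a_k^0$ then force $a_i = a_i^0$ and $m_i = m_i^0$ for all $i$, so $\theta = \theta_0$. The only delicate point is the degree bookkeeping in the middle step: one must check that clearing denominators leaves a numerator of degree at most $2k-2$ — this uses both simplicity of the poles and the $O(s^{-2})$ decay coming from $\sum_i m_i = \sum_i m_i^0 = 1$ — so that vanishing at exactly $q = 2k-1$ non-polar points suffices. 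The possibility that some $a_i$ coincides with some $a_j^0$ only reduces the number of poles and therefore only helps, while \eqref{equ3.1} is precisely what keeps the evaluation points $s_j$ away from all poles.
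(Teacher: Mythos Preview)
Your proof is correct and follows essentially the same route as the paper: reduce $\varphi(\theta)=0$ to an identity of rational functions in $s$ at the $m$ points $s_j=\underline s(u_j)$, clear denominators, bound the degree, and then match poles/residues to recover $\theta=\theta_0$. The paper makes the change of variable $b_i=1/a_i$ and compares the two polynomials of degree $2k-1$ directly, whereas you work with the difference $R(s)$ and bound the numerator degree by $2k-2$ via the observation $R(s)=O(s^{-2})$; this is in fact the same cancellation (both polynomials in the paper's display share leading coefficient $1$), but you make explicit why $m\ge 2k-1$ points already suffice, a step the paper leaves implicit.
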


\begin{proof}
Since $\underline{s}(u)$ is a bijective function from $S_F^c$ to $B^+$ and $u_1, \ldots, u_m$ are distinct, $s_1, \ldots, s_m$ are also distinct.

Suppose there is a $\theta=(a_1,\ldots, a_k, m_1,\ldots,m_{k-1})$ such that $\varphi(\theta)=0$.
Denote by $\theta_0=( a'_1,\ldots, a'_k, m'_1, \ldots, m'_{k-1} )$ the true value of the parameter. We will show that $\theta=\theta_0$.
Denote $b_i=1/a_i$ and $b'_i=1/a'_i\ (i=1,\ldots,k)$, we have then
\begin{equation}\label{equ7.6-b}
   \sum_{i=1}^k \frac{m_i}{s_j+b_i} =   \sum_{i=1}^k
   \frac{m_i'}{s_j+b_i'}  \quad (j=1,\ldots,m).
\end{equation}
Now look $s_j$ as a parameter $s$ and reduction to common factors
leads to
\[
(s+b_1')\cdots(s+b_k') \sum_{i=1}^k m_i \prod_{\ell\ne i}
(s+b_\ell)  =
(s+b_1)\cdots(s+b_k) \sum_{i=1}^k m_i' \prod_{\ell\ne i}
(s+b_\ell') .
\]
These are polynomials of degree $2k-1$; they coincide at  $m\ge 2k-1$
different points $s=s_j$;   they are then equal.
Back to \eqref{equ7.6}, we have now for all $s\neq -b_i, -b_i'$,
\[      \sum_{i=1}^k \frac{m_i}{s+b_i} =   \sum_{i=1}^k
   \frac{m_i'}{s+b_i'}.
\]
Now each $b_i$ should match one $b_\ell'$, because otherwise $b_i\ne
b_\ell'$ for all $\ell$ and by  letting $ s\to -b_i$ we get a contradiction.
So there is one $b_\ell'$ matches (then unique) for $b_i$.
This proves also that $m_i=m_\ell'$. As the $b_i$ are ordered,
it is necessary that $b_\ell'=b_i'$ and hence also $m_i=m_i'$.
\end{proof}

Now let's begin the proof of Theorem \ref{th2}.
Recall that
\begin{eqnarray*}
\widehat{\theta}_n&=&\arg\min_{\theta\in\Theta} \sum_{j=1}^{m}\bigg(u_j+\frac{1}{\underline{s}_n(u_j)}-\frac{p}{n}\int\frac{tdH(t,\theta)}{1+t\underline{s}_n(u_j)}\bigg)^2\\
&:=&\arg\min_{\theta\in\Theta}\varphi_n(\theta).
\end{eqnarray*}
Under the assumption of the theorem, by the convergence of $\underline{s}_n(u_j)\ (j=1,\ldots,m)$, $\varphi_n(\theta)$ is well defined on $\Theta$ for all large $n$. Moreover, for any fixed $\theta\in\Theta$, we have
\begin{equation}
\varphi_n(\theta)\rightarrow\varphi(\theta),\label{equ7.7}\nonumber
\end{equation}
almost surely.
Proposition \ref{pro2} guarantees that $\theta=\theta_0$ is the unique solution to $\varphi(\theta)=0$ on $\Theta$.

We claim that for almost all $\omega$, there is a compact set $\overline{\Theta}=\overline{\Theta}(\omega)\subset{\Theta}$ which contains all $\widehat{\theta}_n(\omega)$ for large $n$.
It's easy to see that for all large $n$, $\varphi_n(\theta)$ is uniformly bounded on $\overline{\Theta}$ and has continues partial derivatives with respect to $\theta$. By the Vitali's convergence theorem, we get
\begin{equation}\label{equ7.8}
\sup_{\theta\in\overline\Theta}|\varphi_n(\theta)-\varphi(\theta)|\rightarrow0.
\end{equation}
For any $\varepsilon>0$, by the continuity of
$\varphi(\theta)$, we have
$$
\inf_{||\theta-\theta_{0}||>\varepsilon\atop \theta\in\overline\Theta}
\varphi(\theta)>\varphi(\theta_0)=0.
$$
From this and \eqref{equ7.8}, when $n$ is large,
\begin{equation}\nonumber
\inf_{\|\theta-\theta_0\|>\varepsilon\atop \theta\in\overline\Theta}
\varphi_n(\theta)>\varphi_n(\theta_0).
\end{equation}
This proves that minimum point $\widehat\theta_n$ of
$\varphi_n(\theta)$ for $\theta\in\overline\Theta$ must be in the
ball $\{\|\theta-\theta_0\|\le \varepsilon\}$. Hence the convergence $\widehat\theta_n\rightarrow\theta_0$.

To complete the proof, it is sufficient to prove the claim, i.e. there is a compact set $\overline{\Theta}\subset\Theta$ such that for large $n$,
\begin{equation}\nonumber
\inf_{\theta\in\overline\Theta^c}
\varphi_n(\theta)>\varphi_n(\theta_0).
\end{equation}
Suppose not. Then there exists a sequence $\{\theta_l, l=1,2,\ldots\}$ tending to the boundary $\partial\Theta$ of $\Theta$ such that $\lim_{l\rightarrow\infty}\varphi_n(\theta_l)\leq\varphi_n(\theta_0)$.
Under this situation, we only need to consider the following two cases.

The first is that $\{\theta_l\}$ has a convergent sub-sequence, i.e. $\theta_{l_k}\rightarrow\theta\in\partial\Theta$, as $k\rightarrow\infty$, then it follows that $$0\leq\varphi(\theta)=\lim_{n\rightarrow\infty}\lim_{k\rightarrow\infty}\varphi_n(\theta_{l_k})
\leq\lim_{n\rightarrow\infty}\varphi_n(\theta_0)=\varphi(\theta_0)=0,$$
hence $\varphi(\theta)=0$.
By a similar technique used in the proof of Proposition \ref{pro2}, we may get $\theta=\theta_0$, a contradiction.

The second is that $||\theta_l||=\big(\sum_{i=1}^ka_{il}^2+\sum_{i=1}^{k-1}m_{il}^2\big)^{1/2}\rightarrow\infty$. Then we immediately know there exists $a_{il}$ such that $a_{il}\rightarrow\infty$, as $l\rightarrow\infty$. Without loss of generality, suppose that
\begin{eqnarray*}
\begin{cases}
a_{il}\rightarrow\infty                &(1\leq i\leq k_1),\\
\sum_{i=1}^{k_1}m_{il}\rightarrow m_{0}&\\
a_{il}\rightarrow a_{i}<\infty         & (k_1+1\leq i\leq k),\\
m_{il}\rightarrow m_{i}                & (k_1+1\leq i\leq k-1).
\end{cases}
\end{eqnarray*}
We have then
$$0\leq\lim_{n\rightarrow\infty}\lim_{l\rightarrow\infty}\varphi_n(\theta_{l})
\leq\lim_{n\rightarrow\infty}\varphi_n(\theta_0)=\varphi(\theta_0)=0,$$
and thus
\begin{eqnarray*}
\lim_{n\rightarrow\infty}\lim_{l\rightarrow\infty}\varphi_n(\theta_l)=
\sum_{j=1}^m\bigg(z_j-\frac{1-cm_0}{s_j}+c\sum_{i=k_1+1}^k\frac{a_{i} m_i}{1+a_{i}s_j}\bigg)^2=0.
\end{eqnarray*}
If $m_0=0$ then the problem is similar to the first case. Assume $m_0\neq0$.
Denote $\theta_0=(a'_{1},\ldots,a'_{k},m'_{1},\ldots,m'_{k-1})$, we have
\begin{eqnarray*}
\frac{m_0}{s_j}+\sum_{i=k_1+1}^k\frac{a_{i} m_{i}}{1+a_{i}s_j}=\sum_{i=1}^k\frac{a'_{i} m'_{i}}{1+a'_{i}s_j},
\end{eqnarray*}
for $j=1,\ldots,m$. Now look $s_j$ as a parameter $s$ and multiplying common factors
leads to
\begin{eqnarray*}
&&s\prod_{i_1=k_1+1}^k(1+a_{i_1}s)
\prod_{i_2=1}^k(1+a'_{i_2}s)\bigg(\frac{m_0}{s}+\sum_{i=k_1+1}^k\frac{a_{i} m_{i}}{1+a_{i}s}\bigg)\\
&=&s\prod_{i_1=k_1+1}^k(1+a_{i_1}s)
\prod_{i_2=1}^k(1+a'_{i_2}s)\bigg(\sum_{i=1}^k\frac{a'_{i} m'_{i}}{1+a'_{i}s}\bigg).
\end{eqnarray*}
These are polynomials of degree $2k-k_1\leq 2k-1$; they coincide at  $m\ge 2k-1$
different points $s=s_j$;   they are then equal.
Comparing their constant terms comes into conflict.

The proof is then complete.

\subsection{ Proof of Theorem \ref{th3}}\label{th3proof}
The proof of this theorem is similar to the proof of Theorem \ref{th2}. We only present the following proposition.

\begin{proposition}\label{pro3}
If $u_1,\ldots,u_m$ are distinct and $m\geq q$, then $\varphi(\theta)=0$ for the continues model has a unique solution $\theta_0$ on $\Theta$.
\end{proposition}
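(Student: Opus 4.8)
The plan is to mirror the structure of the proof of Proposition \ref{pro2}: reduce the condition $\varphi(\theta)=0$ to an equality of two analytic functions of the auxiliary variable $\underline s$ evaluated at the $m$ distinct points $s_j=\underline s(u_j)$, and then use the fact that these functions are rational (hence determined by finitely many values) to conclude that the parameter vectors coincide. Suppose $\theta=(\alpha_1,\dots,\alpha_q)$ satisfies $\varphi(\theta)=0$, and let $\theta_0=(\alpha_1',\dots,\alpha_q')$ be the true parameter. Since $\underline s(u)$ is a bijection from $S_F^c\setminus\{0\}$ onto $B^+$ (Theorem \ref{th1}(3)) and the $u_j$ are distinct, the $s_j$ are $m\ge q$ distinct points. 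From $\varphi(\theta)=0$ and the linearity of the continuous model in $\theta$, I would obtain for each $j$
\[
 \sum_{j'=0}^{q}(\alpha_{j'}-\alpha_{j'}')\int_0^\infty\frac{t^{\,j'+1}e^{-t}}{1+t s_j}\,dt=0 .
\]

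The key step is to show that the functions $\underline s\mapsto F_{j'}(\underline s):=\int_0^\infty t^{\,j'+1}e^{-t}(1+t\underline s)^{-1}\,dt$, for $j'=0,\dots,q$, are linearly independent, so that the displayed equalities at $q+1\le m+1$ \emph{hmm} — more carefully, at $m\ge q$ distinct points — force all coefficients $\alpha_{j'}-\alpha_{j'}'$ to vanish. Since $\alpha_0$ is a determined affine function of $\alpha_1,\dots,\alpha_q$, this is exactly $\theta=\theta_0$. To see the linear independence, note each $F_{j'}$ extends to an analytic function of $\underline s$ on $\CC\setminus(-\infty,0]$ (the pole locus $-1/t$, $t\in S_H\subset[0,\infty)$, lies on the nonpositive axis away from the relevant interval), and one can compute its behaviour near a boundary point or its derivatives: indeed $F_{j'}$ is, up to normalization, the $j'$-th moment integral and the family $\{F_{j'}\}$ is related to successive derivatives/integrals of the single function $\int_0^\infty e^{-t}(1+t\underline s)^{-1}dt$ in a triangular way. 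A clean route: the linear combination $\sum_{j'}\beta_{j'}F_{j'}(\underline s)=\int_0^\infty P(t)e^{-t}(1+t\underline s)^{-1}dt$ with $P(t)=\sum\beta_{j'}t^{j'+1}$; if this vanishes on an interval of $\underline s$, then by analytic continuation it vanishes on $\CC\setminus(-\infty,0]$, and letting $\underline s\to-1/t_0$ for any $t_0>0$ (so that $-1/t_0$ is \emph{not} in $S_H^c$ is irrelevant — we use arbitrary $t_0$ by continuation) the residue-type analysis, or simply the injectivity of the Stieltjes/Laplace-type transform as used via the substitution $1/(1+t\underline s)=\int_0^\infty e^{-(1/t+\underline s)v}\,dv\cdot t^{-1}$ exactly as in the proof of Theorem \ref{th1}(4), yields $P(t)e^{-t}/t\equiv 0$, hence $P\equiv 0$, hence all $\beta_{j'}=0$.

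Thus I would actually argue as in Theorem \ref{th1}(4): rewrite $\int_0^\infty t^{j'+1}e^{-t}(1+t\underline s)^{-1}dt=\int_0^\infty\!\!\int_0^\infty t^{j'}e^{-t}e^{-(1/t+\underline s)v}\,dv\,dt$ for $\underline s$ in a neighbourhood where $1+t\underline s>0$ on $S_H$, recognize the inner structure as a Laplace transform in $\underline s$ of $v\mapsto \int_0^\infty t^{j'}e^{-t}e^{-v/t}dt$, and conclude by uniqueness of the Laplace transform that $\sum_{j'}(\alpha_{j'}-\alpha_{j'}')\int_0^\infty t^{j'}e^{-t}e^{-v/t}\,dt=0$ for all $v>0$; then letting $v\to 0^+$ and using that the functions $t\mapsto t^{j'}$ are linearly independent on $(0,\infty)$ (equivalently differentiating in $v$ or invoking the injectivity of the relevant integral transform) forces $\alpha_{j'}=\alpha_{j'}'$ for all $j'$.

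The main obstacle is the linear-independence / injectivity step: unlike the discrete case, where clearing denominators gives genuine polynomials of bounded degree and one simply counts zeros, here the functions $F_{j'}$ are transcendental, so one must invoke analytic continuation together with a transform-uniqueness argument (Laplace, as in Theorem \ref{th1}(4)) rather than a degree count. Care is also needed to stay within a region of $\underline s$ where the double integral converges and Fubini applies, namely where $1+t\underline s$ keeps a fixed sign over $S_H$ — but this is guaranteed on a neighbourhood of any point of $B^+$, exactly as in the proof of Theorem \ref{th1}(4). Once these analytic points are handled, the remainder (existence of the compact set $\overline\Theta$ absorbing the estimators, uniform convergence $\varphi_n\to\varphi$ via Vitali, and the boundary analysis) is identical to the proof of Theorem \ref{th2} and need not be repeated.
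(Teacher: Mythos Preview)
Your proposal has a genuine gap at exactly the point you flagged with ``\emph{hmm}''. Linear independence of the family $\{F_{j'}\}_{j'=0}^q$ as functions of $\underline s$ does \emph{not} imply that vanishing of a linear combination at $m\ge q$ specified points forces the coefficients to vanish. That implication would require the evaluation matrix $(F_{j'}(s_j))_{j,j'}$ to have full column rank, which is a separate (and non-obvious) claim. Your Laplace-transform argument from Theorem~\ref{th1}(4) establishes that $\sum_{j'}\beta_{j'}F_{j'}(\underline s)=0$ on an \emph{interval} forces $\beta=0$, but here you only have equality at finitely many points $s_1,\dots,s_m$, and the $F_{j'}$ are transcendental, so there is no analytic-continuation bridge from $m$ zeros to identically zero.

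The paper closes this gap by a different device. Writing $p(t)=\sum_{j'=0}^q(\alpha_{j'}-\alpha_{j'}')t^{j'}$, one has $\int_0^\infty \frac{t}{1+ts_j}\,p(t)e^{-t}\,dt=0$ for each $j$. Let $0=t_0<t_1<\cdots<t_{q_0}<t_{q_0+1}=\infty$ be the positive roots of $p$ (so $q_0\le q$); on each $(t_{i-1},t_i)$ the sign of $p$ is constant, so by the (generalised) mean value theorem there exist $\xi_i\in(t_{i-1},t_i)$ with
\[
\int_{t_{i-1}}^{t_i}\frac{t}{1+ts}\,p(t)e^{-t}\,dt=\frac{\xi_i}{1+\xi_i s}\int_{t_{i-1}}^{t_i}p(t)e^{-t}\,dt.
\]
Summing over $i$ converts the transcendental integral into a \emph{rational} function of $s$, namely $\sum_{i=1}^{q_0+1}\frac{\xi_i c_i}{1+\xi_i s}$ with $c_i=\int_{t_{i-1}}^{t_i}p(t)e^{-t}\,dt$. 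Clearing denominators gives a polynomial in $s$; its top coefficient (at $s^{q_0}$) is $\prod_i\xi_i\cdot\int_0^\infty p(t)e^{-t}\,dt=0$ because the constraint $\alpha_0=1-\sum_{j\ge1}j!\alpha_j$ makes $\int_0^\infty p(t)e^{-t}\,dt=0$. Hence the polynomial has degree at most $q_0-1\le q-1<m$, yet it vanishes at the $m$ distinct points $s_j$, so it is identically zero. Evaluating at $s=-1/\xi_i$ then gives $c_i=0$ for every $i$, which forces $p\equiv 0$ since $p$ has constant sign on each $(t_{i-1},t_i)$. This mean-value reduction is the missing idea: it manufactures a polynomial of controlled degree out of the transcendental $F_{j'}$, so that a zero-counting argument becomes available.
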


\begin{proof}
Suppose there is a $\theta=(\alpha_1,\ldots, \alpha_q)$ such that $\varphi(\theta)=0$.
Denote by $\theta_0=( \alpha'_1,\ldots, \alpha'_k)$ the true value of the parameter. We will show that $\theta=\theta_0$.

Define $p(t,\beta)=\beta_0+\beta_1t+,\cdots,+\beta_{q}t^{q},$ where $\beta=(\beta_1,\ldots,\beta_{q})$ and $\beta_0=1-\sum_{j=1}^qj!\beta_j$.
We have then
\begin{equation}\nonumber
   \int \frac{t}{1+ts_j}p(t,\theta^*)e^{-t}dt =  0  \quad (j=1,\ldots,m),
\end{equation}
where $\theta^*=\theta-\theta_0$ and $s_j=\us(u_j)$.

Suppose $p(t,\theta^*)=0$ has $q_0\ (\leq q)$ positive real roots $t_1< \ldots<t_{q_0}$, and denote $t_0=0, t_{q_0+1}=+\infty$, then $p(t,\theta^*)$ maintains the sign in each interval $(t_{i-1},t_i)\ (i=1,\ldots,q_0+1).$ By mean value theorem, we have
\begin{eqnarray*}
0=\int_0^{+\infty}\frac{t}{1+ts_j}p(t,\theta^*)e^{-t}dt=\sum_{i=1}^{q_0+1}\frac{\xi_i}{1+\xi_is_j}\int_{t_{i-1}}^{t_{i}}p(t,\theta^*)e^{-t}dt\quad (j=1,\ldots,m),
\end{eqnarray*}
where $\xi_i\in(t_{i-1},t_i)\ (i=1,\ldots, q_0+1)$.

Now look $s_j$ as a parameter $s$ and reduction to common factors
leads to
\[
0=\sum_{i=1}^{q_0+1}\prod_{l\neq i}(1+\xi_ls)\xi_i\int_{t_{i-1}}^{t_{i}}p(t,\theta^*)e^{-t}dt.
\]
The left hand side is a polynomial of degree $q_0-1\leq q-1$ (the coefficient of $ s^{q_0}= \prod_{j=1}^{q_0+1}\xi_j\int_0^\infty p(t,\theta^*)e^{-t}dt=0$); the equation has $m\ge q$ different roots $s=s_j$; the polynomial is then zero. Let $s=-1/\xi_i\ (i=1,\ldots,q_0+1)$, we get
$$\int_{t_{i-1}}^{t_i}p(t,\theta^*)e^{-t}dt=0\quad (i=1,\ldots,p_0+1),$$
which is followed by $p(t,\theta^*)=0$, and thus $\theta^*=0$.
\end{proof}

\end{document}